\newcommand{\blue}{\color[rgb]{0,0,1}}
\numberwithin{equation}{section}
\numberwithin{figure}{section}
\numberwithin{table}{section}
\newtheorem{theorem}{Theorem}[section]
\begin{document}

\vspace*{-5mm}

\begin{center}
{\Large\bf Measuring Discrete Risks on Infinite Domains: Theoretical Foundations,
Conditional Five Number Summaries, and Data Analyses}

\vspace{8mm}

{\large\sc
Daoping Yu\footnote[1]{
~{\sc Corresponding Author}: Daoping Yu, Ph.D., ASA,
Department of Mathematical Sciences, University of Wisconsin-Milwaukee,
Milwaukee, Wisconsin, USA. ~~ {\em e-mail\/}: ~{\blue\tt dyu@uwm.edu}}}

\vspace{1mm}

{\em University of Wisconsin-Milwaukee}

\vspace{5mm}

{\large\sc
Vytaras Brazauskas\footnote[2]{
~Vytaras Brazauskas, Ph.D., ASA, Department of Mathematical Sciences,
University of Wisconsin-Milwaukee, Milwaukee, Wisconsin, USA.
~~ {\em e-mail\/}: ~{\blue\tt vytaras@uwm.edu}}}

\vspace{1mm}

{\em University of Wisconsin-Milwaukee}

\vspace{5mm}

{\large\sc
Ri{\v{c}}ardas Zitikis\footnote[3]{
~Ri{\v{c}}ardas Zitikis, Ph.D., School of Mathematical and Statistical
Sciences, Western University, London, Ontario, Canada.
~~ {\em e-mail\/}: ~{\blue\tt zitikis@stats.uwo.ca}}}

\vspace{1mm}

{\em Western University}

\end{center}

\vspace{1mm}

\begin{quote}
{\small
{\bf\em Abstract\/}.
To accommodate numerous practical scenarios, in this paper we extend statistical inference for smoothed quantile estimators from finite
domains to infinite domains. We accomplish the task with the help of a newly designed truncation methodology for discrete loss distributions
with infinite domains. A simulation study illustrates the methodology in the case of  several distributions, such as Poisson, negative binomial,
and their zero inflated versions, which are commonly used in insurance
industry to model claim frequencies. Additionally, we propose a very
flexible bootstrap-based approach for the use in practice. Using
automobile accident data and their modifications, we compute what we have termed the
conditional five number summary (C5NS) for the tail risk and construct
confidence intervals for each of the five quantiles making up C5NS, and then calculate the tail probabilities. The results show that the smoothed quantile approach classifies the tail riskiness of portfolios not only more accurately  but also produces lower coefficients of variation in the estimation of tail probabilities than those obtained using the linear interpolation
approach.

\medskip

{\bf\em Keywords\/}. Bootstrap; Claim Counts; Smoothed Quantiles; Value-at-Risk; Truncated Distributions.
}
\end{quote}

%
%
%

\section{Introduction and Motivation}
\label{Section 1}

The Value-at-Risk (VaR) has been a prominent risk measure in
the insurance and banking sectors \citep[e.g.,][]{B2019}. In the case of
insurance losses, which are non-negative random variables, the VaR at
level $p\in (0,1)$, which is close to $1$ and is often set by regulators
\citep[e.g.,][]{B2019}, is the smallest capital needed to cover the losses
with probabilities not smaller than $p$:
\[
\textrm{VaR}_p(Y)
=F_Y^{-1}(p)
=\inf\big\{ y : F_Y(y)\ge p \big\},
\]
where $Y$ is the loss random variable and $F_Y$ is its cumulative
distribution function (c.d.f.). An analogous formula and its interpretation
hold in the case of real-valued profit-and-loss (P\&L) variables, with
the losses now being on the ``negative'' side of the real line.


Obviously, the VaR does not tell us what happens far in the tail. Moreover, if
coherency is  important, then the VaR lacks this property. The Expected
Shortfall (ES), on the other hand, which is also known as the conditional tail expectation as well as  by several other names, is a coherent risk measure \citep{ADEH99}, whose first-of-the-kind axiomatic characterization has been provided by \cite{WZ2021}. The ES gives the researcher a much-needed hint of what is possibly  happening in the tail beyond the VaR at the pre-specified probability level $p$.

Although the aforementioned properties of the ES are attractive, one may argue that, 
due to the usual skewness of loss distributions, the (conditional)
expectation is not the best way to ``summarize'' the tail, and so one
would naturally think of using the (conditional) median, as the statistical
literature would suggest. In this way, as a replacement to the ES at the level
$p$, we naturally arrive at the VaR at the level $(1+p)/2$. This VaR,
still being just one parameter, does not provide a fuller and satisfactory
description of the tail, a fact noted by many authors, as exemplified by
the quotation:
\begin{quote}
VaR does not account for properties of the distribution beyond the
confidence level. This implies that $\textrm{VaR}_{\alpha}(X)$ may
increase dramatically with a small increase in $\alpha$. To adequately
estimate risk in the tail, one may need to calculate several VaRs
with different confidence levels. \citep[p.~283]{SSU2008}
\end{quote}

We suggest to use a vector-valued risk measure, which we call the Conditional
Five Number Summary (C5NS), defined by
\[
\textrm{C5NS}_p(Y) = \Big(
F_Y^{-1}(u_1), F_Y^{-1}(u_2), F_Y^{-1}(u_3), F_Y^{-1}(u_4), F_Y^{-1}(u_5)
\Big),
\]
where
\begin{gather*}
u_1 = 0.90p+0.10, \quad
u_2 = 0.75p+0.25, \quad
u_3 = 0.50p+0.50, \\
u_4 = 0.25p+0.75, \quad
u_5 = 0.10p+0.90.
\end{gather*}
These five $u_i$'s give rise to the conditional
$10^{\rm th}$, $25^{\rm th}$, $50^{\rm th}$, $75^{\rm th}$, and $90^{\rm th}$
percentiles, respectively, of the distribution of $Y$ above $\textrm{VaR}_p(Y)$, giving
a fairly informative description of the distributional tail of the loss
variable $Y$.

To see why the modifier ``conditional'' is natural, take for the sake of
argument the quantile $F_Y^{-1}(u_3)$. It is easy to check that when
the c.d.f.~$F_Y$ is continuous, the quantile coincides with what we may call
the conditional tail median (CTM) defined for any loss variable
$Y$ and any probability level $p \in (0,1)$ by
\[
\textrm{CTM}_p(Y)= \textrm{median}\big( Y \mid Y > \textrm{VaR}_p(Y)\big) .
\]
In the present paper, however, we concentrate on discrete loss random
variables, and thus the distinction between $F_Y^{-1}(u_3)$ and
$\textrm{CTM}_p(Y)$ is necessary, although on the intuitive level we
may still conveniently think of the two as carrying the same meaning.

Indeed, quite often in practice \citep[e.g.,][]{dmpw07,BDG2009}, researchers encounter
discrete loss random variables. The c.d.f.'s of these
variables are stair-case functions consisting of flat segments
as well as of jumps. We can now easily see why even an infinitesimal
decrease in the level $p$ may result in a massive decrease in the
regulatory capital, and likewise, an infinitesimal increase in the
level $p$ may result in a massive increase in the regulatory capital.
This sensitivity on $u$ is unnatural and could be hugely detrimental
to either the insurer or the regulator, or to both. The issue can be
fixed by smoothing the VaR, for which a number of methods have long
been available in the statistical literature
\citep{s77,hd82,MSS2005,wh11}. The methods have by now been adopted, modified, and explored by several
insurance-focused researchers \citep{BDG2009,abg13,bg21,br23}.

In addition to having suggested the alternative risk measure C5NS to the ES for
the sake of accommodating heavily skewed distributions, in the present
paper we offer a smoothing technique that opens up a technically-convenient
path for the development of statistical inference for the VaR at any
prescribed probability level. Even more, the technique allows the researcher to
simultaneously estimate any finite number of VaR's at whatever
probability levels might have been chosen, or imposed, thus enabling
the researcher to arrive at confidence intervals, as well as at other
statistical inference results, for the proposed vector-valued
risk measure C5NS.

In this paper we follow the methodology introduced by
\citet{wh11}, which has been extended by \citet{br23} to fully resolve
the theoretical challenges that emerge when discrete risks
reside on infinite domains. Distributions of this type
(e.g., Poisson, NB) and their zero-inflated
versions are commonly used for modeling claim frequencies.
We note at the outset that in the current paper proposed smoothing
technique differs from the traditional kernel-based approach
\citep{abg13,bg21}, where one has to {\em assume\/} a bandwidth and
the forms of a kernel. With our approach, such
assumptions can be avoided as the computational formulas of
the quantile estimators directly follow from the existing
theorems for order statistics of i.i.d.~random
variables. Moreover, the smoothed quantiles can be easily
converted to, and hence used for the estimation of, tail
probabilities. As we shall see in Section~\ref{Section 5}, such ``smooth''
estimates can reduce the variability of tail estimates up to
40-60\% when compared to those based on linear approximations
\citep[][Section 13.1]{kpw12}.

The rest of this paper is organized as follows.
In Section~\ref{Section 2}, we present a three-part design of the truncation
methodology and lay theoretical foundations for it.
In Section~\ref{Section 3}, we carry out simulation studies using the regular
Poisson and NB distributions as well as their zero-inflated
versions, in order to illustrate the established theory.
In Section~\ref{Section 4}, we design an algorithm for bootstrapping
smoothed quantiles, thus yet again validating
our theory and also serving a flexible tool for approximating
more complex problems.
In Section~\ref{Section 5}, we demonstrate the practical advantages of the new
methodology in capturing the tail risk of insurance portfolios.
Finally, a summary of the paper and concluding remarks are
provided in Section~\ref{Section 6}.

\section{Smoothed Discrete Risks}
\label{Section 2}

\subsection{Finite Domains}
\label{Section 2.1}

Consider a discrete random variable $Y$ with c.d.f.~$F_Y$ and probability mass function (p.m.f.)
$p_j = \mathbf{P} \big\{ Y = y_{j:d} \big\}$, where $y_{j:d}$ is the $j^{\rm th}$
smallest {\em distinct\/} value that $Y$ can take. Denote
$F_j := F_{Y}(y_{j:d}) = \sum_{i=1}^j p_i$, with $F_0 \equiv 0$.
When $\sum_{j=1}^d p_j = 1$ and $1 < d < \infty$ (the total number of
possible distinct values), the smoothed population quantile
function for the discrete random variable $Y$ is defined as
\begin{equation}
\label{Qu1}
Q_{Y}(u) ~=~ \sum_{j=1}^d \Big(
B_{\alpha_u, \beta_u}(F_{j}) -
B_{\alpha_u, \beta_u}(F_{j-1})
\Big) y_{j:d}
~=:~
\sum_{j=1}^d w_{j(u)} \, y_{j:d},
\end{equation}
where $B_{\alpha_u,\beta_u}$ denotes the c.d.f.~of a beta
random variable with the parameters $\alpha_u = (d+1) u$ and
$\beta_u = (d+1) (1-u)$. Note that the weights satisfy $w_{j(u)} \geq 0$
and $\sum_{j=1}^d w_{j(u)} = 1$. To gain intuitive appreciation of definition~\eqref{Qu1}, we refer to \cite{br23} and references therein.

When an i.i.d.~realization of $Y$ is obtained, say
$y_1, \ldots, y_n$, then $y_{1:d} < \cdots < y_{d:d}$ represent
the distinct data points with the corresponding frequencies
$r_{1}, \ldots, r_{d}$. The sample p.m.f.~is
$\widehat{p}_{i} = r_{i}/n$ and the empirical c.d.f.~at
$y_{j:d}$ is given by $\widehat{F}_j = \widehat{F}_{Y}(y_{j:d})
= \sum_{i=1}^j \widehat{p}_i = n^{-1} \sum_{i=1}^j r_i$. Thus,
the sample estimator of the smoothed $u^{\rm th}$ quantile for discrete
data is defined by replacing $F_j$ by $\widehat{F}_j$ in
definition~\eqref{Qu1} of $Q_Y(u)$. In this way we arrive at the estimator
\begin{equation}
\widehat{Q}_Y(u) ~=~
\sum_{j=1}^d \Big(
B_{\alpha_u, \beta_u}(\widehat{F}_{j}) -
B_{\alpha_u, \beta_u}(\widehat{F}_{j-1})
\Big) y_{j:d}
~=:~
\sum_{j=1}^d \widehat{w}_{j(u)} y_{j:d} ,
\label{empQu1}
\end{equation}
where $\widehat{F}_0 = 0$ and $B_{\alpha_u,\beta_u}$ is the beta
c.d.f.~with $\alpha_u = (d+1) u$ and $\beta_u = (d+1) (1-u)$.
As proven by \citet[][Theorem 4.1]{wh11},
$\widehat{Q}_Y(u)$ consistently estimates $Q_Y(u)$ and is asymptotically
normal. Also,
$\widehat{Q}_Y(u_1), \ldots, \widehat{Q}_Y(u_l)$ are consistent
and jointly asymptotically normal \cite[Theorem 3.1]{br23}.

\subsection{Infinite Domains}
\label{Section 2.2}

The infinite case $d = \infty$ includes many relevant distributions used
in actuarial research. For example, the Poisson, NB, and their
zero-inflated versions are commonly used for modeling claim frequencies.
To replicate the design and properties of the estimators introduced in Section~\ref{Section 2.1},
\citet[][Section 3.4]{br23} proposed to construct truncated versions of
infinitely countable discrete distributions. The goal was to find a finite
interval where most of the probability mass would be located, and then emulate
the finite case $d < \infty$.

Specifically, denote the mean and variance
of $Y$ by $\mu_Y$ and $\sigma_Y^2$, respectively,
assuming $\sigma_Y^2 < \infty$. Define the
interval
\begin{equation}
\label{chebyshev}
\big[ L_k; \; U_k \big] ~:=~
\big[ \mu_Y - k \sigma_Y; \;
\mu_Y + k \sigma_Y \big].
\end{equation}
According to Chebyshev's inequality, the probability that $Y$
 falls into interval~\eqref{chebyshev} is at least $1-1/k^2$.
Moreover, when $\mu_Y$ and $\sigma_Y^2$ are
estimated with their respective sample versions $\overline{Y}$
and $S^2$ (based on a sample of size $n$), the coverage probability
bound remains fairly close to $1-1/k^2$ and is equal to
$1 - \frac{1}{n+1}
\left[
\frac{n+1}{n} \left( \frac{n-1}{k^2} + 1 \right)
\right]$,
where $[\cdot]$ denotes the greatest integer part \citep[see][]{k12}.
For example, if $k=5$, then this empirical bound is equal to 0.909
for $n=10$, 0.941 for $n=50$, and 0.950 for $n=100$, whereas
Chebyshev's bound is equal to 0.960. \citet{br23} considered multiple
choices of $k$ in their simulation studies and recommended $k = 3, 4$,
or 5 as the most reasonable practical choices.

With this in mind,
the total number of distinct points and the smallest and largest
distinct point used in definitions~\eqref{Qu1} and \eqref{empQu1} were defined
as follows:
\begin{equation}
\label{pop-int-old}
\mbox{Population:}
\qquad
d_{k} ~=~ y_{d_k:d_k} - y_{1:d_k} + 1,
\quad
y_{1:d_k} ~=~
\max \left\{ 0, \big[ L_k \big] \right\},
\quad
y_{d_k:d_k} ~=~  \big[ U_k \big] + 1,
\end{equation}
\begin{equation}
\label{emp-int-old}
\mbox{Sample:}
\qquad
\widehat{d}_{k} ~=~
\widehat{y}_{\widehat{d}_{k}:\widehat{d}_{k}} -
\widehat{y}_{1:\widehat{d}_{k}} + 1,
\quad
\widehat{y}_{1:\widehat{d}_{k}} ~=~ \max \left\{ 0,
\big[ \widehat{L}_k \big] \right\},
\quad
\widehat{y}_{\widehat{d}_{k}:\widehat{d}_{k}} ~=~
\big[ \widehat{U}_k \big] + 1,
\end{equation}
where $[\cdot]$ denotes the greatest integer part,
$L_k$ and $U_k$ are given by definition~\eqref{chebyshev}, and
$\widehat{L}_k$ and $\widehat{U}_k$ are the sample
versions of $L_k$ and $U_k$, respectively, that is, when $\mu_Y$
is replaced by $\overline{Y}$ and $\sigma_Y^2$
by $S^2$. The corresponding truncated distributions
are
\begin{equation}
\label{popTR}
F^*_{j(k)} ~=~ \mathbf{P}
\left\{
Y \leq y_{j:d_{k}}  \mid  y_{0:d_{k}} <
Y \leq y_{d_{k}:d_{k}}
\right\}
~=~
\frac{F_{j(k)} - F_{0(k)}}{F_{d_k(k)} - F_{0(k)}}
\end{equation}
and
\begin{equation}
\label{empTR}
\widehat{F}^*_{j(k)} ~=~
\widehat{\mathbf{P}}
\left\{
Y \leq \widehat{y}_{j:\widehat{d}_{k}}  \mid
\widehat{y}_{0:\widehat{d}_{k}} <
Y \leq \widehat{y}_{\widehat{d}_{k}:\widehat{d}_{k}}
\right\}
~=~
\frac{\widehat{F}_{j(k)} - \widehat{F}_{0(k)}}
{\widehat{F}_{\widehat{d}_k(k)} - \widehat{F}_{0(k)}} \, .
\end{equation}

This design of truncated distributions as specified by quantities~\eqref{pop-int-old}--\eqref{empTR} is easy to implement
in practice but difficult to work with when theoretical
properties of the quantile estimators are considered.
Indeed, it is not
clear how to prove the conjecture about the asymptotic properties of
such estimators \citep[see][Conjecture 3.1]{br23} because points~\eqref{pop-int-old} and
\eqref{emp-int-old} use the rounding down, or ``floor,''
operation and the truncated distributions~\eqref{popTR} and \eqref{empTR} involve
c.d.f.'s that are discontinuous at those points.

Thus, we propose to revise the design of the finite intervals
and the corresponding truncated distributions as follows:
\begin{enumerate}
  \item Use the same definitions of $L_k$ and $U_k$, as well as of
$\widehat{L}_k$ and $\widehat{U}_k$ as before. To make sure
they always result in non-integer values, choose $k$ an irrational number.
For example, to get $k$ greater than 3, we may consider
$k = \pi \approx 3.1416$,
$k = \pi^2 \approx 9.8696$, or
$k = \pi^3 \approx 31.0063$, which yield the following values
of Chebyshev's bound: 0.899, 0.990, and 0.999, respectively.
  \item
  Replace points~\eqref{pop-int-old} and \eqref{emp-int-old}
with
\begin{equation}
\label{pop-int}
y_{1:d_k} ~=~
\min \big\{ y_j \in \mathbb{Z} ~\big|~ y_j \in [L_k; U_k] \big\},
\qquad
y_{d_k:d_k} ~=~
\max \big\{ y_j \in \mathbb{Z} ~\big|~ y_j \in [L_k; U_k] \big\},
\end{equation}
\begin{equation}
\label{emp-int}
\widehat{y}_{1:\widehat{d}_k} ~=~
\min \big\{ \widehat{y}_j \in \mathbb{Z} ~\big|~
\widehat{y}_j \in [\widehat{L}_k; \widehat{U}_k] \big\},
\qquad
\widehat{y}_{\widehat{d}_k:\widehat{d}_k} ~=~
\max \big\{ \widehat{y}_j \in \mathbb{Z} ~\big|~
\widehat{y}_j \in [\widehat{L}_k; \widehat{U}_k] \big\}.
\end{equation}
Here $d_{k} ~=~ y_{d_k:d_k} - y_{1:d_k} + 1$ (population)
and $\widehat{d}_{k} ~=~ \widehat{y}_{\widehat{d}_k:\widehat{d}_k} -
\widehat{y}_{1:\widehat{d}_k} + 1$ (sample). Note that
$y_j$ and $\widehat{y}_j$ represent distinct integers which
may be negative and outside the support of the underlying
probability distribution. Fortunately, such situations do not create issues.
For example, if $L_k < 0$, then there is no probability mass
on $y_{1:d_k}, \ldots, y_{r:d_k}$ for some $1 \leq r < d_k$.
This would result in zero weights assigned to points
$y_{1:d_k}, \ldots, y_{r:d_k}$ in definition~\eqref{Qu1} of $Q_Y(u)$, and would reduce
the number $d_k$ to $d_k-r$. Similar explanation applies to
the case $\widehat{L}_k < 0$.
  \item
  Replace truncated distributions~\eqref{popTR} and \eqref{empTR}
with
\begin{equation}
\label{popTRnew}
F^*_{j(k)} ~=~ \mathbf{P}
\left\{
Y \leq y_{j:d_{k}}  \mid  L_k < Y \leq U_k
\right\}
~=~
\frac{F_{j(k)} - F_Y(L_k)}{F_Y(U_k) - F_Y(L_k)}
\end{equation}
and
\begin{equation}
\label{empTRnew}
\widehat{F}^*_{j(k)} ~=~
\widehat{\mathbf{P}}
\left\{
Y \leq \widehat{y}_{j:\widehat{d}_{k}}  \mid
\widehat{L}_k < Y \leq \widehat{U}_k
\right\}
~=~
\frac{\widehat{F}_{j(k)} - \widehat{F}_Y(\widehat{L}_k)}
{\widehat{F}_Y(\widehat{U}_k) - \widehat{F}_Y(\widehat{L}_k)} \, .
\end{equation}
Here $F_{j(k)} = \mathbf{P} \left\{ Y \leq y_{j:d_{k}} \right\}$,
$F_Y(L_k) = \mathbf{P} \left\{ Y \leq L_k \right\}$, and
$F_Y(U_k) = \mathbf{P} \left\{ Y \leq U_k \right\}$.
For the empirical versions of these functions, we use
$\widehat{y}_{j:\widehat{d}_{k}}$,
$\widehat{L}_k$, and $\widehat{U}_k$.
Note that $F_Y$ is continuous
and differentiable at $L_k$ and $U_k$, and $\widehat{F}_Y$ is continuous
and differentiable at $\widehat{L}_k$ and
$\widehat{U}_k$, because these points are
non-integers.
\end{enumerate}

Following the above three-part design and in particular utilizing
formulas~\eqref{pop-int} and \eqref{popTRnew}, we define the
smoothed $u^{\rm th}$ quantile
\begin{equation}
\label{popQu}
Q^{(k)}_*(u)
~=~
\sum_{j=1}^{d_k} \Big(
B_{\alpha_u, \beta_u}(F^*_{j(k)}) -
B_{\alpha_u, \beta_u}(F^*_{j(k)-1})
\Big) y_{j:d_k}
~=:~ \sum_{j=1}^{d_k} w^{(k)}_{j(u)} y_{j:d_k}
\end{equation}
for the truncated discrete population,
where $B_{\alpha_u, \beta_u}$ denotes the beta c.d.f.~with the parameters $\alpha_u = (d_k+1) u$ and $\beta_u = (d_k+1) (1-u)$.
Likewise, using formulas~\eqref{emp-int} and \eqref{empTRnew}, we arrive at
the smoothed $u^{\rm th}$ quantile
\begin{equation}
\label{empQu}
\widehat{Q}^{(k)}_*(u)
~=~
\sum_{j=1}^{\widehat{d}_k} \Big(
B_{\widehat{\alpha}_u, \widehat{\beta}_u}(\widehat{F}^*_{j(k)}) -
B_{\widehat{\alpha}_u, \widehat{\beta}_u}(\widehat{F}^*_{j(k)-1})
\Big) \widehat{y}_{j:\widehat{d}_k}
~=:~
\sum_{j=1}^{\widehat{d}_k}
\widehat{w}^{(k)}_{j(u)} \widehat{y}_{j:\widehat{d}_k}
\end{equation}
for the truncated discrete sample,
where $B_{\widehat{\alpha}_u, \widehat{\beta}_u}$
is the beta c.d.f.~with the parameters
$\widehat{\alpha}_u = (\widehat{d}_k+1) u$ and
$\widehat{\beta}_u = (\widehat{d}_k+1) (1-u)$.

Importantly, the functions
$\widehat{Q}^{(k)}_*$ and $Q^{(k)}_*$ are directly related
to their non-truncated versions, $\widehat{Q}^{(k)}_Y$ and
$Q^{(k)}_Y$, respectively.
For example, $Q^{(k)}_*$ can be interpreted as the inverse
function of a smoothed version of $F^*$, which is related to
a similarly smoothed version of $F_Y$ through equation~\eqref{popTRnew}.
Inverting truncated distribution~\eqref{popTRnew} for smoothed $F^*$ and $F_Y$ leads
to the equation
\begin{equation}
\label{Qu-new}
Q^{(k)}_*(u) ~=~ Q^{(k)}_Y
\Big(
F_Y(L_k) + u \big( F_Y(U_k) - F_Y(L_k) \big)
\Big)
\end{equation}
relating $Q^{(k)}_*$ to $Q^{(k)}_Y$,
which is the inverse of smoothed $F_Y$.
It is clear from formula~\eqref{Qu-new} that if $F_Y(L_k) \approx 0$
and $F_Y(U_k) \approx 1$, then $Q^{(k)}_*(u) ~ \approx ~
Q^{(k)}_Y(u)$.

In Figure~\ref{Figure 2.1},
\begin{figure}[h!]
\centering
\resizebox{68mm}{68mm}{\includegraphics{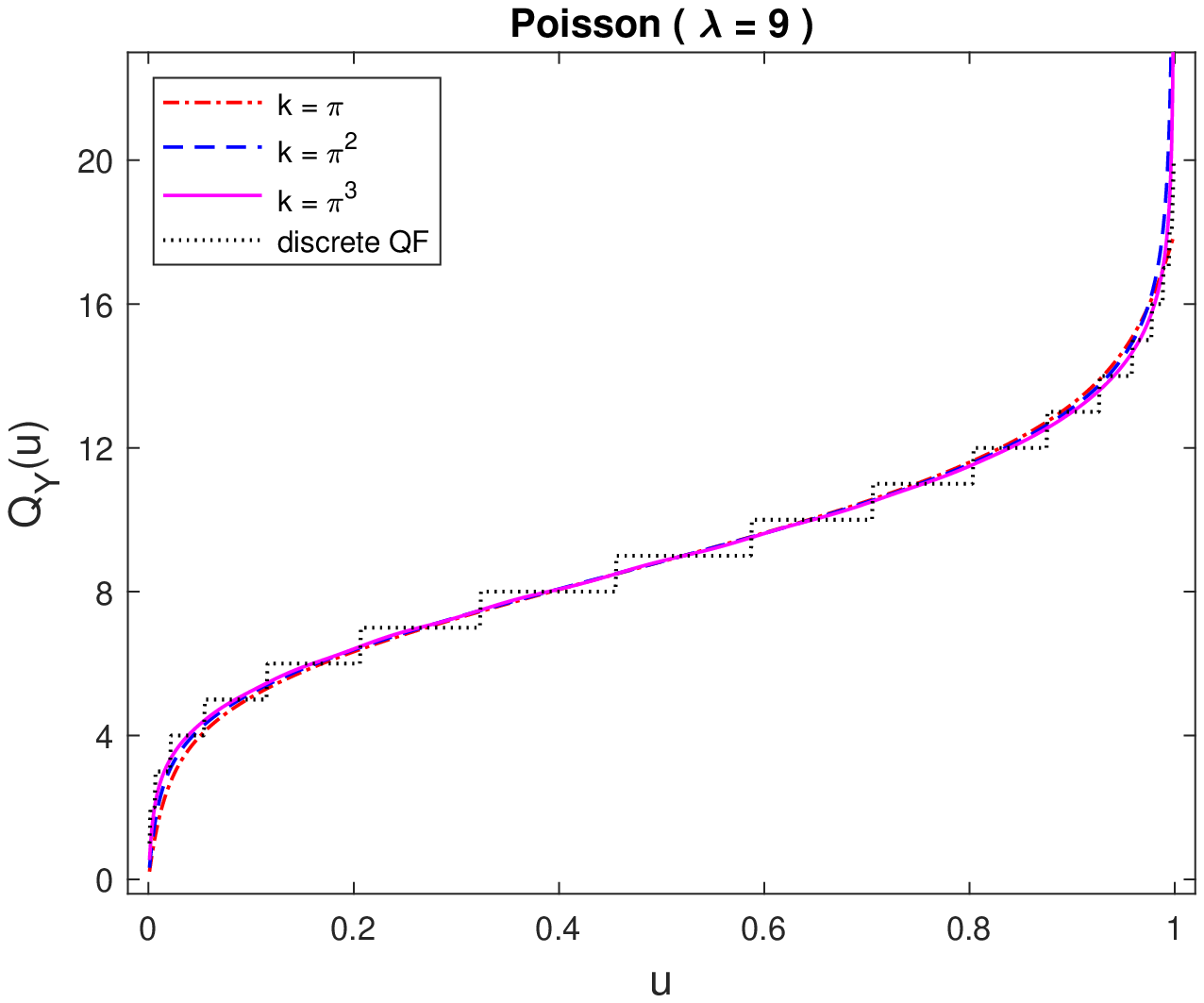}}
\qquad
\resizebox{68mm}{68mm}{\includegraphics{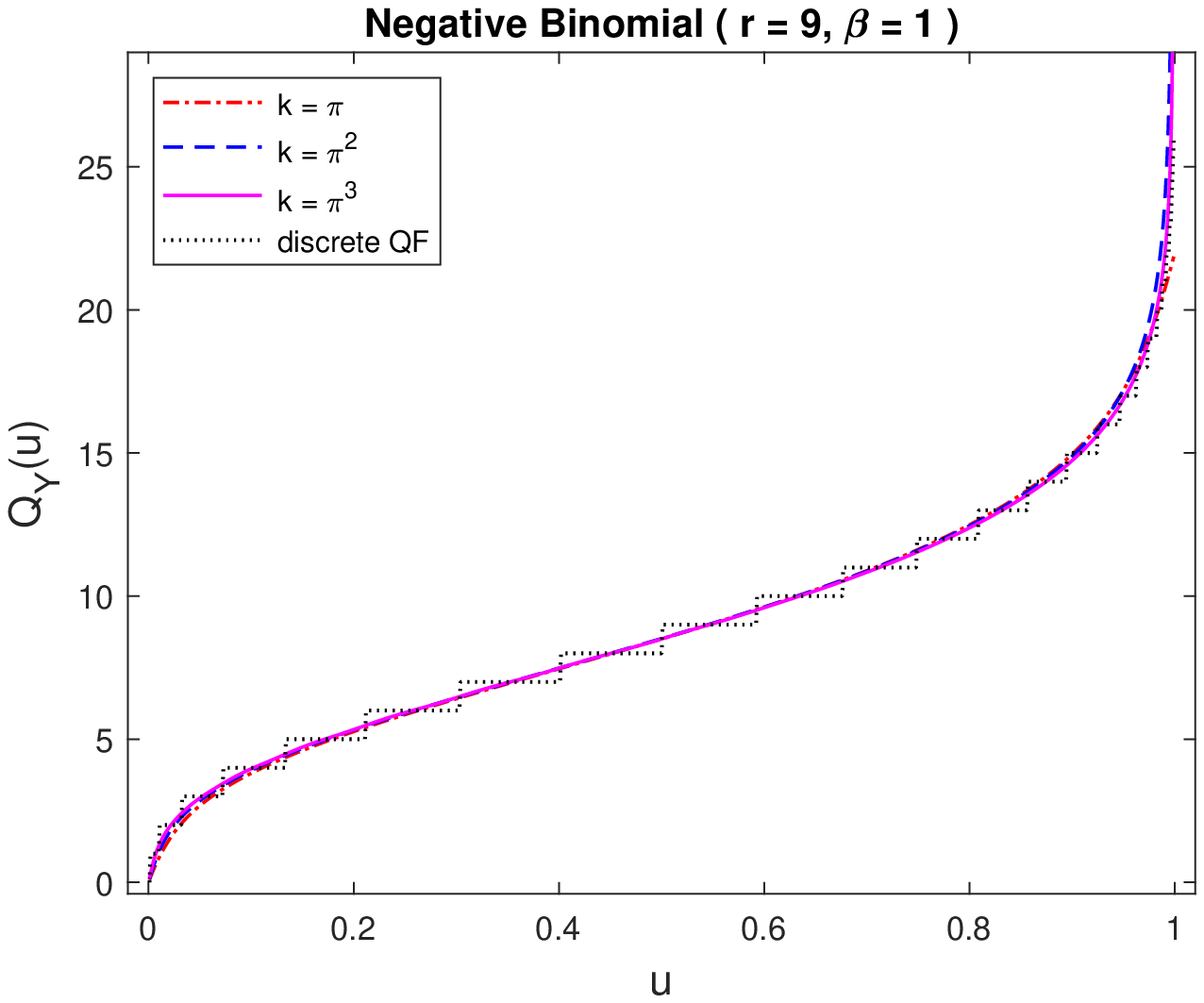}}
\\
\resizebox{68mm}{68mm}{\includegraphics{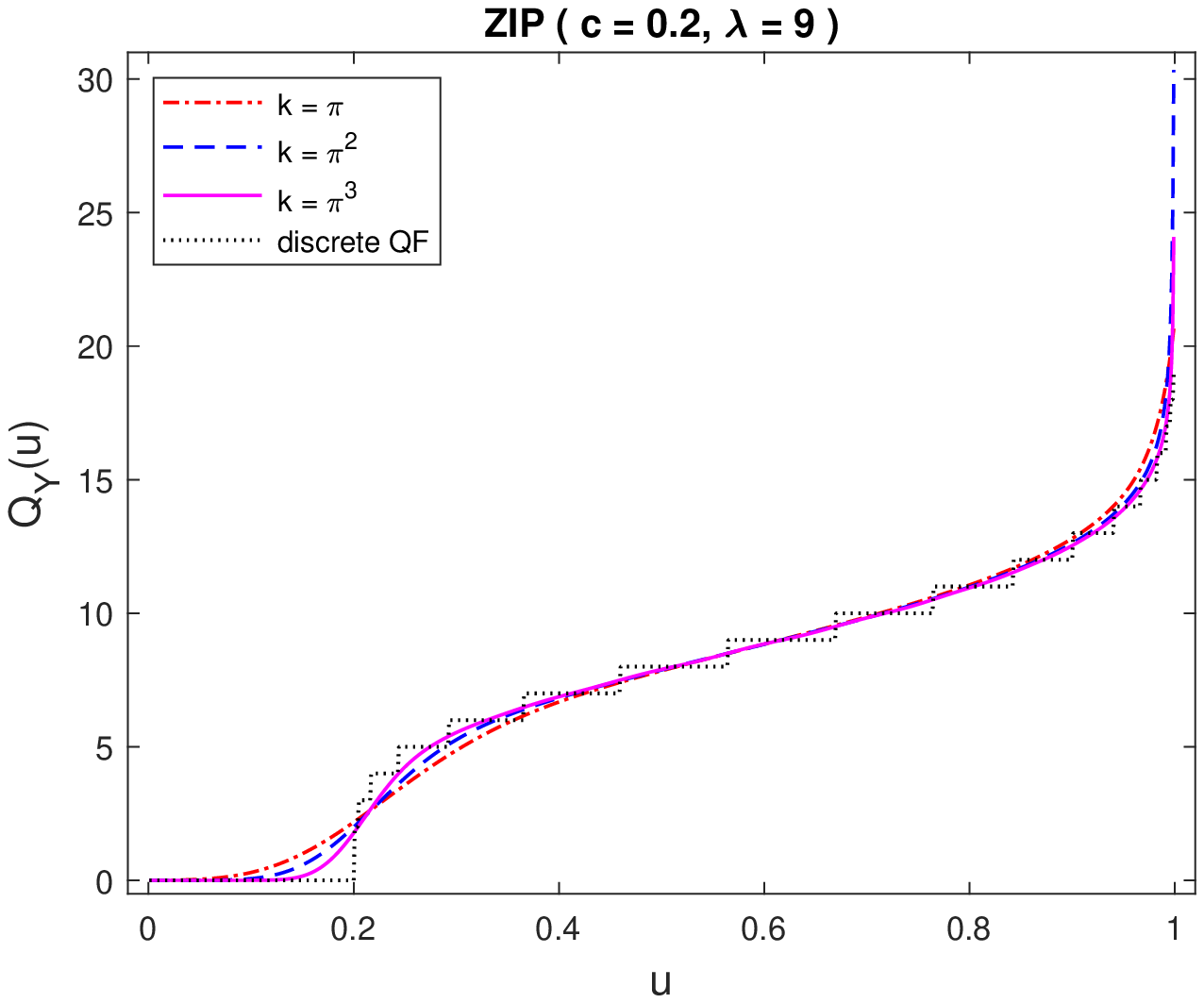}}
\qquad
\resizebox{68mm}{68mm}{\includegraphics{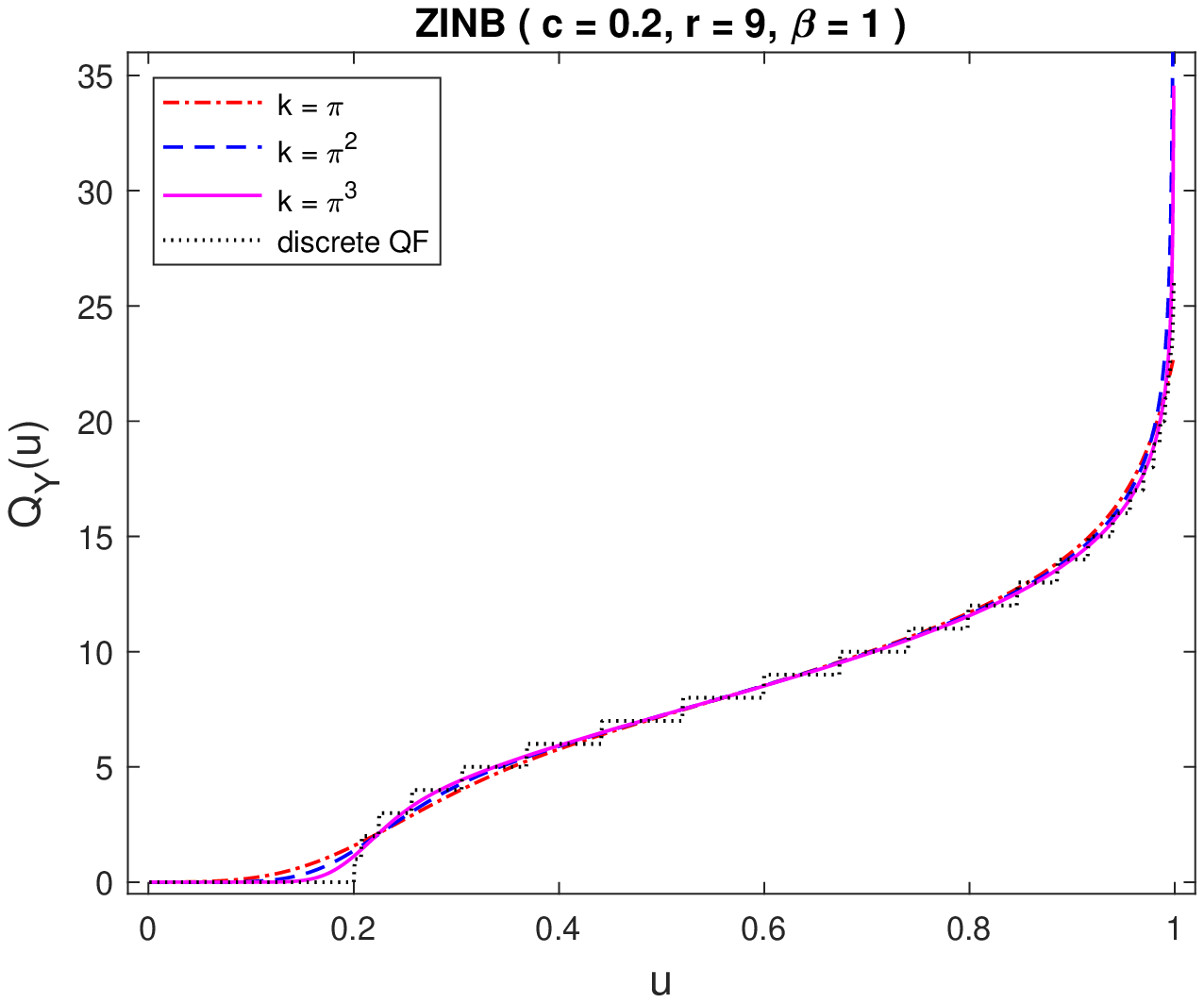}}
\\
\resizebox{68mm}{68mm}{\includegraphics{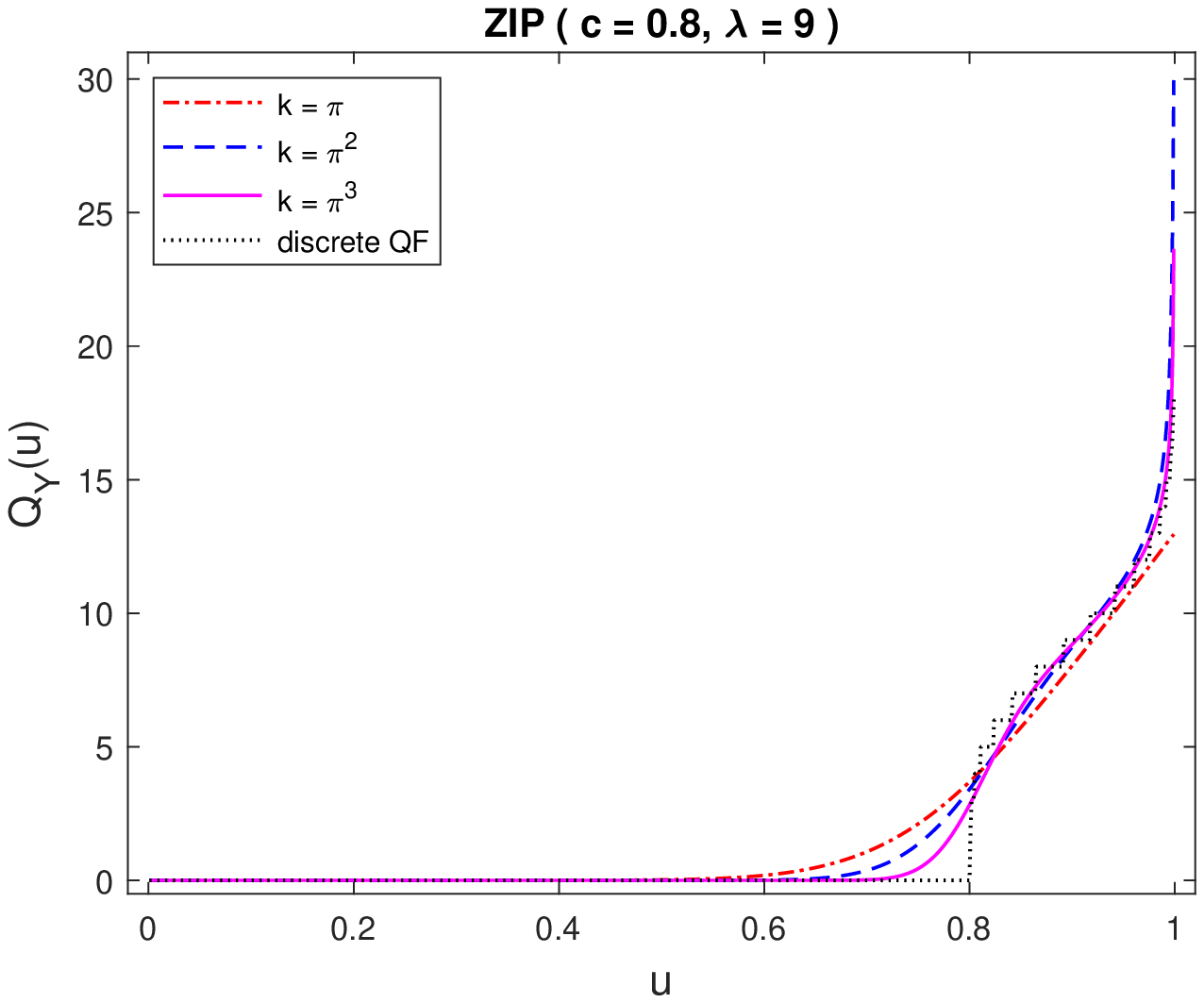}}
\qquad
\resizebox{68mm}{68mm}{\includegraphics{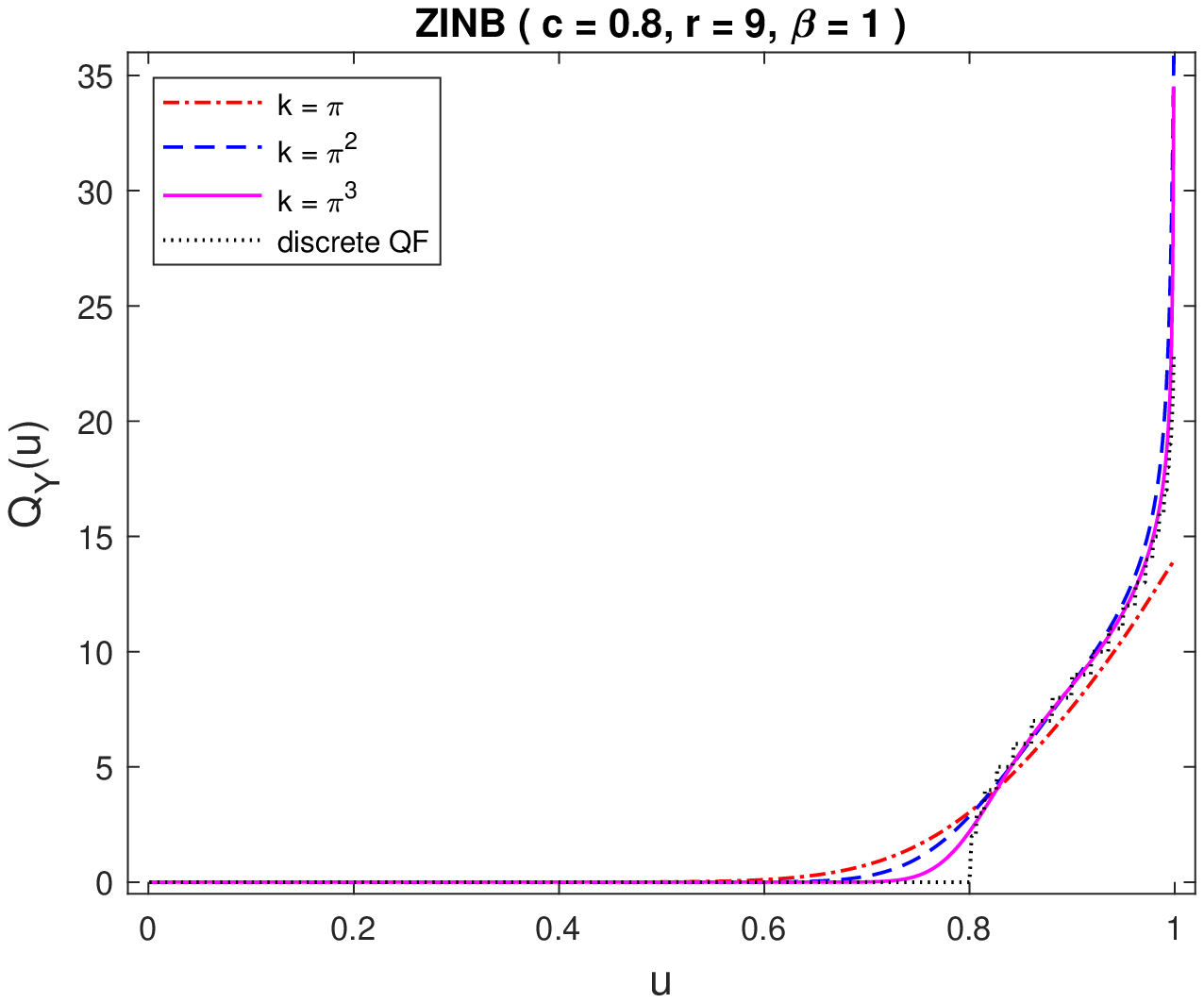}}
\caption{Discrete (dotted) and smooth quantile functions
of the Poisson (top left), NB (top right), and
the corresponding zero-inflated distributions for data
truncation intervals
$\mu_Y \pm k \sigma_Y$.}
\label{Figure 2.1}
\end{figure}
we have depicted the smooth quantile functions
$Q^{(k)}_Y(u)$ and the discrete quantile functions of
Poisson($\lambda = 9$) and NB($r = 9, \beta = 1$) distributions and their zero-inflated
versions with $c = 0.2$ and $0.8$.
The parameters $\lambda$, $r$, and $\beta$ have been selected so
that both distributions would have the same mean, but the variance
of NB would be two times larger. (For specific parametrization of these distributions, we refer to \cite{kpw12}.)
The smooth curves are constructed using data truncation intervals
$\mu_Y \pm k \sigma_Y$ with
$k = \pi, \pi^2, \pi^3$. As we see from the figure, all the
three choices of $k$ work well for standard Poisson
and NB distributions. However, for their
zero-inflated versions ZIP and ZINB the case $k = \pi$
misses both tails of the distribution and $k = \pi^2$ can
be markedly improved by $k = \pi^3$ at the jumps from 0
to 1, 2, 3, or 4. Since typical claim count data
contain about 80\% of zeros (i.e., $c = 0.8$), we recommend
using $k$ of the magnitude $\pi^3 \approx 31$.

\subsection{Theoretical Foundations}
\label{Section 2.3}

With $Q^{(k)}_*(u)$ and
$\widehat{Q}^{(k)}_*(u)$ defined by equations~\eqref{popQu}
and \eqref{empQu}, respectively, the vector
\begin{equation}
\label{popQu2}
\Big( Q^{(k)}_*(u_1), \ldots, Q^{(k)}_*(u_l) \Big)
\end{equation}
of smoothed \textit{population} quantiles ($0 < u_1 < \cdots < u_l < 1$) can be estimated by the vector
\begin{equation}
\label{empQu2}
\left( \widehat{Q}^{(k)}_*(u_1), \ldots, \widehat{Q}^{(k)}_*(u_l) \right)
\end{equation}
of smoothed \textit{sample} quantiles.
According to our next theorem, the latter vector
is a consistent and (jointly)
asymptotically normal estimator of vector~\eqref{popQu2}.

\begin{theorem}\label{Theorem 2.1}
Given an i.i.d.~sample of size $n$
from a discrete distribution $F_Y$ with infinite support $y_{1:d} <
y_{2:d} < y_{3:d} < \cdots$ and $d = \infty$, let its truncated
version $y_{1:d_k} < y_{2:d_k} < \cdots < y_{d_k:d_k}$
with $d_k < \infty$ be constructed using the three-part design of
Section~\ref{Section 2.2}. Then, when $n \rightarrow \infty$,
\begin{enumerate}[label=\rm(\roman*)]
  \item\label{part-i}
$\left( \widehat{Q}^{(k)}_*(u_1), \ldots,
\widehat{Q}^{(k)}_*(u_l) \right)
~\stackrel{\tiny \mathbf{P}}{\longrightarrow}~
\Big( Q^{(k)}_*(u_1), \ldots, Q^{(k)}_*(u_l) \Big)$,
  \item\label{part-ii}
$\displaystyle
\left( \widehat{Q}^{(k)}_*(u_1), \ldots,
\widehat{Q}^{(k)}_*(u_l) \right)
~\sim~ \mathcal{AN}
\left(
\big( Q^{(k)}_*(u_1), \ldots, Q^{(k)}_*(u_l) \big), \,
\frac{1}{n} \, \mathbf {H D H'}
\right)$,
\end{enumerate}
where
$\mathbf{D} := \big[ d_{ij} \big]_{(d_k-1) \times (d_k-1)}$
with $d_{ij} = d_{ji} = F^*_{i(k)} (1-F^*_{j(k)})$, $i(k) \leq j(k)$,
and $\mathbf{H} := \big[ h_{ij} \big]_{l \times (d_k-1)}$
with $h_{ij} = (y_{j:d_k} - y_{j+1:d_k}) \,
b_{\alpha_{u_i}, \beta_{u_i}} (F^*_{j(k)})$.
Here $b_{\alpha_{u_i}, \beta_{u_i}}$ is the beta p.d.f.~with the parameters $\alpha_{u_i} = (d_k+1) u_i$ and
$\beta_{u_i} = (d_k+1) (1-u_i)$, and the c.d.f.~$F^*_{j(k)} $ is defined by formula~\eqref{popTRnew}.
\end{theorem}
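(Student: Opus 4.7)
The plan is to reduce the problem to the finite-domain joint central limit result of Theorem 3.1 of \citet{br23} (which extends \citet{wh11}) by arguing that the randomness in the truncation boundaries becomes asymptotically negligible, thanks to the irrationality of $k$. By the strong law of large numbers and consistency of the sample variance, $\widehat{L}_k \stackrel{\mathbf{P}}{\to} L_k$ and $\widehat{U}_k \stackrel{\mathbf{P}}{\to} U_k$. Since $k$ is irrational and therefore $L_k$ and $U_k$ are non-integers (the very point of the three-part design), there exists $\varepsilon>0$ such that neither of the intervals $(L_k-\varepsilon,L_k+\varepsilon)$, $(U_k-\varepsilon,U_k+\varepsilon)$ contains an integer. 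On the event $A_n := \{|\widehat{L}_k-L_k|<\varepsilon\}\cap\{|\widehat{U}_k-U_k|<\varepsilon\}$, whose probability tends to $1$, the random integer endpoints freeze at their deterministic limits: $\widehat{y}_{1:\widehat{d}_k}=y_{1:d_k}$, $\widehat{y}_{\widehat{d}_k:\widehat{d}_k}=y_{d_k:d_k}$, $\widehat{d}_k=d_k$, $\widehat{\alpha}_u=\alpha_u$, $\widehat{\beta}_u=\beta_u$, and since $\widehat{F}_Y$ is constant between consecutive integers, $\widehat{F}_Y(\widehat{L}_k)=\widehat{F}_Y(L_k)$ and $\widehat{F}_Y(\widehat{U}_k)=\widehat{F}_Y(U_k)$.

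Consequently, on $A_n$ the estimator $\widehat{Q}^{(k)}_*(u)$ coincides with its ``idealized'' version in which all boundaries are the deterministic $L_k,U_k$. A direct count then shows that $\widehat{F}^*_{j(k)} = N^+_j/n_0$, where $n_0=\#\{i:Y_i\in(L_k,U_k]\}$ and $N^+_j=\#\{i:Y_i\in(L_k,y_{j:d_k}]\}$; that is, $\widehat{F}^*_{j(k)}$ is the ordinary empirical CDF, at $y_{j:d_k}$, of the sub-sample of $n_0$ observations falling inside $(L_k,U_k]$. Conditional on $n_0$, this sub-sample is i.i.d.~from the finitely supported truncated distribution $F^*$ on $\{y_{1:d_k},\ldots,y_{d_k:d_k}\}$. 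The problem thus reduces exactly to the finite-domain setting of Section~\ref{Section 2.1} with $d\mapsto d_k$ and $F_Y\mapsto F^*$.

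Part \ref{part-i} then follows by applying the consistency statement of the cited finite-domain result to this truncated sub-sample, using $n_0\to\infty$ almost surely. For part \ref{part-ii}, the same result delivers joint asymptotic normality; the entries $d_{ij}=F^*_{i(k)}(1-F^*_{j(k)})$ are the standard multinomial covariances of the EDF vector $(\widehat{F}^*_{1(k)},\ldots,\widehat{F}^*_{d_k-1(k)})$ under $F^*$, while the Jacobian entries $h_{ij}=(y_{j:d_k}-y_{j+1:d_k})\,b_{\alpha_{u_i},\beta_{u_i}}(F^*_{j(k)})$ arise from an Abel summation-by-parts rewriting of \eqref{popQu},
\[
Q^{(k)}_*(u) \;=\; y_{d_k:d_k} \;+\; \sum_{j=1}^{d_k-1} (y_{j:d_k}-y_{j+1:d_k})\,B_{\alpha_u,\beta_u}(F^*_{j(k)}),
\]
followed by differentiation via $\partial B_{\alpha_u,\beta_u}(x)/\partial x = b_{\alpha_u,\beta_u}(x)$ and the multivariate delta method applied jointly across $u_1,\ldots,u_l$.

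The delicate part is the first step: forcing the random integer boundaries of the truncation region to coincide with their deterministic limits for all large $n$. This is exactly what the irrationality of $k$ provides; without it (as in the earlier design \eqref{pop-int-old}--\eqref{empTR} of \citet{br23}, where $L_k$ or $U_k$ could themselves be integers), the empirical CDF evaluated at $\widehat{L}_k$ could differ from that at $L_k$ by a full atom of the distribution, injecting non-vanishing boundary-indicator fluctuations into the limit law and blocking a clean asymptotic normality result. Once this stabilization is secured, the rest of the argument is a direct transfer of the previously established finite-domain theorem.
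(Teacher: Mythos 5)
Your proposal is correct and follows essentially the same route as the paper's proof: reduce to the finite-domain result of \citet[Theorem 3.1]{br23} by replacing the random truncation boundaries $(\widehat{L}_k,\widehat{U}_k)$ with their deterministic limits $(L_k,U_k)$, and then argue that this replacement is asymptotically harmless because $L_k$ and $U_k$ are non-integers by the irrational choice of $k$. The only difference is in how that last step is dispatched --- the paper invokes consistency of $(\widehat{L}_k,\widehat{U}_k)$, continuity of the c.d.f.'s at $L_k,U_k$, and a generalized Slutsky argument, whereas you show the estimator coincides exactly with the idealized one on an event of probability tending to one --- which is a slightly more explicit rendering of the same idea.
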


\begin{proof}
Replacing $\widehat{L}_k$ and
$\widehat{U}_k$ in equation~\eqref{emp-int} by $L_k$ and $U_k$, respectively, which are known constants,
we arrive at equation~\eqref{pop-int}. Similarly modified
equation~\eqref{empTRnew} becomes
$
\widetilde{F}^*_{j(k)} = \frac{\widehat{F}_{j(k)} -
\widehat{F}_Y(L_k)}{\widehat{F}_Y(U_k) - \widehat{F}_Y(L_k)},
$
resulting in the modification
\[
\widetilde{Q}^{(k)}_*(u) ~=~
\sum_{j=1}^{d_k} \Big(
B_{\alpha_u, \beta_u}(\widetilde{F}^*_{j(k)}) -
B_{\alpha_u, \beta_u}(\widetilde{F}^*_{j(k)-1})
\Big) y_{j:d_k}
\]
of estimator~\eqref{empQu2}.
This creates the finite domain scenario of
Section 2.1. Therefore, the vector estimator
\[
\left( \widetilde{Q}^{(k)}_*(u_1), \ldots,
\widetilde{Q}^{(k)}_*(u_l) \right)
\]
satisfies statements~\ref{part-i} and \ref{part-ii}. This can
be established by following the proof of
\citet[Theorem 3.1]{br23}.
Note first that
$\big( \overline{Y}, S^2 \big)
\stackrel{\tiny \mathbf{P}}{\longrightarrow}
\big( \mu_Y, \sigma_Y^2 \big)$
implies $\big( \widehat{L}_k, \widehat{U}_k \big)
\stackrel{\tiny \mathbf{P}}{\longrightarrow} \big( L_k, U_k \big)$,
and since by design both $L_k \in ( y_{1:d_k}-1; y_{1:d_k} )$
and $U_k \in ( y_{d_k:d_k}; y_{d_k:d_k}+1 )$ are the points
where the c.d.f.'s $\widetilde{F}^*_{j(k)}$ and $F^*_{j(k)}$ are continuous,
$\widehat{F}^*_{j(k)} \stackrel{\tiny \mathbf{P}}{\longrightarrow}
F^*_{j(k)}$.
Now note that
$\big( \widehat{Q}^{(k)}_*(u_1), \ldots,
\widehat{Q}^{(k)}_*(u_l) \big)$ is a continuous transformation
of $\widehat{F}^*_{1:d_k}, \ldots, \widehat{F}^*_{d_k:d_k}$.
Therefore, an application of the continuous mapping theorem
\citep[e.g.,][Section 1.7]{s80} implies
\[
\left( \widehat{Q}_*^{(k)}(u_1), \ldots, \widehat{Q}_*^{(k)}(u_l) \right)
\stackrel{\tiny \mathbf{P}}{\longrightarrow}
\Big( Q_*^{(k)}(u_1), \ldots, Q_*^{(k)}(u_l) \Big),
\]
which proves part~\ref{part-i} of the theorem.

To prove part~\ref{part-ii}, we note the already established result
\[
\left( \widetilde{Q}^{(k)}_*(u_1), \ldots,
\widetilde{Q}^{(k)}_*(u_l) \right)
~\sim~ \mathcal{AN}
\left(
\big( Q^{(k)}_*(u_1), \ldots, Q^{(k)}_*(u_l) \big), \,
\frac{1}{n} \, \mathbf {H D H'}
\right).
\]
Since $\widehat{Q}^{(k)}_*(u)$ is $\widetilde{Q}^{(k)}_*(u)$
with $(L_k, U_k)$ replaced by its consistent estimator
$\big( \widehat{L}_k, \widehat{U}_k \big)$, the generalized
Slutsky's theorem \citep[Section 13.1.2]{d04} assures
that estimators $\big( \widetilde{Q}^{(k)}_*(u_1), \ldots,
\widetilde{Q}^{(k)}_*(u_l) \big)$
and
$\big( \widehat{Q}^{(k)}_*(u_1), \ldots,
\widehat{Q}^{(k)}_*(u_l) \big)$
have the same asymptotic normal distribution. This concludes the proof of part~\ref{part-ii} and establishes Theorem~\ref{Theorem 2.1}.
\end{proof}

\section{Simulated Data Examples}
\label{Section 3}

In this section, we conduct a Monte Carlo simulation study with
the objective of illustrating the theoretical properties established in
Theorem~\ref{Theorem 2.1}. We start by describing the study design (Section~\ref{Section 3.1})
and then provide summarizing tables and associated with them discussions for
Poisson and NB distributions (Section~\ref{Section 3.2}) as well as
for their zero-inflated versions ZIP and ZINB (Section~\ref{Section 3.3}).

\subsection{Study Design}
\label{Section 3.1}

The study design is based on the following choices.
\begin{figure}[h!]
\begin{minipage}{17cm}
\begin{enumerate}
  \item[] \underline{\hspace{\linewidth}}

\vspace{-1ex}

  \item[] {\bf ~ Simulation Design}

\vspace{-3ex}

  \item[] \underline{\hspace{\linewidth}}

\vspace{-1ex}

\begin{itemize}
  \item {\em Discrete distributions\/}.
~Poisson$(\lambda=9)$ and negative binomial NB$(r=9, \, \beta=1)$.

  \item {\em Zero-inflated discrete distributions\/}.
~Zero-inflated Poisson, ZIP$(\lambda=1, \, c=0.8)$, and
zero-inflated negative binomial ZINB$(r=1, \, \beta=1, \, c=0.8)$.

  \item {\em Key formulas and parameters\/} (for generating data
and computing theoretical targets).
\begin{itemize}
    \item
    Means:  $\lambda = 9$ (Poisson), ~$r \beta = 9$ (NB),
~$\frac{\lambda(1-c)}{1-e^{-\lambda}} = 0.32$ (ZIP), and $\frac{r\beta(1-c)}{1-(1+\beta)^{-r}} = 0.40$ (ZINB).
    \item
    Variances: ~$\lambda = 9$ (Poisson),
~$r \beta (1+\beta) = 18$ (NB),
$\frac{1-c}{1-e^{-\lambda}} \left(
\lambda + \frac{\lambda^2 (c-e^{-\lambda})}{1-e^{-\lambda}}
\right) = 0.53$ (ZIP), and
$\frac{1-c}{1-(1+\beta)^{-r}} \left(
r\beta(1+\beta) + \frac{(r\beta)^2(c-(1+\beta)^{-r})}{1-(1+\beta)^{-r}}
\right) = 1.04$ (ZINB).
    \item Regular proportions of zeros: ~$p_0 = e^{-\lambda} = 0.0001$
(Poisson), ~$p_0 = (1+\beta)^{-r} = 0.002$ (NB),
$\frac{p_0 (1-c)}{1-p_0} = \frac{e^{-\lambda}(1-c)}{1-e^{-\lambda}} =
0.12$ (ZIP),
 and $\frac{p_0 (1-c)}{1-p_0} = \frac{(1+\beta)^{-r}(1-c)}{1-(1+\beta)^{-r}} =
0.2$ (ZINB).
    \item Excess proportions of zeros: ~$0$ (Poisson), $0$ (NB),
$\frac{c - p_0}{1-p_0} = \frac{c-e^{-\lambda}}{1-e^{-\lambda}} =
0.68$ (ZIP),
\\[0.25ex]
and $\frac{c - p_0}{1-p_0} = \frac{c-(1+\beta)^{-r}}{1-(1+\beta)^{-r}} =
0.6$ (ZINB).
\end{itemize}

\vspace{-1ex}

  \item {\em Truncation intervals\/}.
~$\overline{Y} \pm k \, S$ for simulated data and
$\mu_Y \pm k \sigma_Y$
for theoretical targets with $k = \pi, \, \pi^2, \, \pi^3$.

  \item {\em Sample sizes\/}. ~$n = 10^2, \, 10^3, \, 10^4$.
\end{itemize}

\vspace{-3ex}

  \item[] \underline{\hspace{\linewidth}}

\end{enumerate}
\end{minipage}
\end{figure}

For any given distribution, we generate 10,000 random samples
of a specified length $n$. For each sample, we estimate the
quartiles $\big( Q_*(0.25), Q_*(0.50), Q_*(0.75) \big)$ of
the distribution using formula~\eqref{empQu}. We then compute the means
and covariance-variance ($\times n$) matrices for the 10,000
estimates of the quartiles. Note that generation of the
zero-inflated data requires a two-step procedure. In the
first step, the {\em excess\/} portion of zeros is generated
from either zero or non-zero values, with the probability of
zero being ~$\frac{c-p_0}{1-p_0}$.
In the second step, the corresponding {\em regular\/} Poisson
or NB distribution is used to generate the {\em remaining\/}
portion of zeros (as well all other positive integer values),
the probability of which is now equal to ~$\frac{p_0(1-c)}{1-p_0}$.
The above two proportions of zeros sum up to $c$, the total
proportion of zeros in the sample.

\subsection{Poisson and NB Distributions}
\label{Section 3.2}

The parameters of the Poisson and NB distributions are selected
to match the models plotted in Figure~\ref{Figure 2.1}. Both distributions
have the same mean, but the variance of NB is twice the variance
of Poisson. We notice from Table~\ref{Table 3.1}
\begin{table}[h!]
\centering
\caption{Estimated means and covariance-variance
($\times n$) matrices of the smoothed quartile
estimators
$\widehat{Q}^{(k)}_*(0.25)$, $\widehat{Q}^{(k)}_*(0.50)$,
$\widehat{Q}^{(k)}_*(0.75)$ for the Poisson$(\lambda=9)$
and NB$(r=9, \, \beta=1)$
distributions truncated at the points $\overline{Y}
\pm k \, S$ with $k = \pi, \, \pi^2, \, \pi^3$.}
\medskip
{\footnotesize
\begin{tabular}{ |c|ccc|c| }
\hline
 & $n = 10^2$ & $n = 10^3$ & $n = 10^4$ & $n=\infty$ \\
\hline
\multicolumn{5}{l}{} \\[-2.5ex]
\multicolumn{5}{l}{Poisson$\, (\lambda=9$), $k = \pi$} \\
\hline
$\widehat{means}$
& (6.82, 8.84, 11.02) 
& (6.82, 8.84, 11.02) 
& (6.82, 8.84, 11.02) 
& (6.815, 8.835, 11.021) \\ 
$\widehat{\textbf{HDH'}}$
& $\begin{bmatrix} 
11.24 &  8.51 &   5.91\\
8.51 &   11.63 &   10.10\\
5.91 &   10.10 &   16.00
\end{bmatrix}$
& $\begin{bmatrix} 
11.64 &  8.65 &   5.78\\
8.65 &   11.79 &   10.20\\
5.78 &   10.20 &   16.29
\end{bmatrix}$
& $\begin{bmatrix} 
11.42 &  8.39 &   5.54\\
8.39 &   11.45 &   9.67\\
5.54 &   9.67 &   15.45
\end{bmatrix}$
& $\begin{bmatrix} 
11.367 &  8.360 &   5.539\\
8.360 &   11.497 &   9.753\\
5.539 &   9.753 &   15.478
\end{bmatrix}$ \\
\hline
\multicolumn{5}{l}{} \\[-2.5ex]
\multicolumn{5}{l}{Poisson$\, (\lambda=9)$, $k = \pi^2$} \\
\hline
$\widehat{means}$
& (6.86, 8.84, 10.98) 
& (6.86, 8.84, 10.98) 
& (6.86, 8.84, 10.98) 
& (6.856, 8.838, 10.982) \\ 
$\widehat{\textbf{HDH'}}$
& $\begin{bmatrix} 
12.05 &  8.38 &   5.71\\
8.38 &   12.28 &   9.73\\
5.71 &   9.73 &   16.52
\end{bmatrix}$
& $\begin{bmatrix} 
12.42 &  8.58 &   5.69\\
8.58 &   12.45 &   9.95\\
5.69 &   9.95 &   17.00
\end{bmatrix}$
& $\begin{bmatrix} 
12.20 &  8.34 &   5.57\\
8.34 &   12.23 &   9.62\\
5.57 &   9.62 &   16.51
\end{bmatrix}$
& $\begin{bmatrix} 
12.153 &  8.309 &   5.526\\
8.309 &   12.289 &   9.714\\
5.526 &   9.714 &   16.579
\end{bmatrix}$ \\
\hline
\multicolumn{5}{l}{} \\[-2.5ex]
\multicolumn{5}{l}{Poisson$\, (\lambda=9)$, $k = \pi^3$} \\
\hline
$\widehat{means}$
& (6.89, 8.84, 10.94) 
& (6.89, 8.85, 10.95) 
& (6.89, 8.85, 10.95) 
& (6.893, 8.853, 10.951) \\ 
$\widehat{\textbf{HDH'}}$
& $\begin{bmatrix} 
12.90 &  8.20 &   5.61\\
8.20 &   13.17 &   9.55\\
5.61 &   9.55 &   17.79
\end{bmatrix}$
& $\begin{bmatrix} 
11.32 &  7.55 &   5.08\\
7.55 &   12.00 &   9.01\\
5.08 &   9.01 &   16.91
\end{bmatrix}$
& $\begin{bmatrix} 
10.61 &  7.09 &   4.83\\
7.09 &   11.45 &   8.49\\
4.83 &   8.49 &   15.77
\end{bmatrix}$
& $\begin{bmatrix} 
10.533 &  7.033 &   4.695\\
7.033 &   11.401 &   8.415\\
4.695 &   8.415 &   15.631
\end{bmatrix}$ \\
\hline
\multicolumn{5}{l}{} \\[-2.5ex]
\multicolumn{5}{l}{NB$\, (r=9, \, \beta=1)$, $k = \pi$} \\
\hline
$\widehat{means}$
& (5.88, 8.51, 11.64) 
& (5.86, 8.50, 11.63) 
& (5.86, 8.50, 11.63) 
& (5.859, 8.504, 11.628) \\ 
$\widehat{\textbf{HDH'}}$
& $\begin{bmatrix} 
18.66 &  15.35 &   11.46\\
15.35 &   23.08 &   21.45\\
11.46 &   21.45 &   36.85
\end{bmatrix}$
& $\begin{bmatrix} 
18.48 &  15.12 &   11.22\\
15.12 &   23.17 &   21.48\\
11.22 &   21.48 &   37.13
\end{bmatrix}$
& $\begin{bmatrix} 
18.10 &  14.53 &   10.21\\
14.53 &   22.35 &   20.27\\
10.21 &   20.27 &   35.13
\end{bmatrix}$
& $\begin{bmatrix} 
18.038 &  14.458 &   10.384\\
14.458 &   22.085 &   20.054\\
10.384 &   20.054 &   34.815
\end{bmatrix}$ \\
\hline
\multicolumn{5}{l}{} \\[-2.5ex]
\multicolumn{5}{l}{NB$\, (r=9, \, \beta=1)$, $k = \pi^2$} \\
\hline
$\widehat{means}$
& (5.92, 8.52, 11.61) 
& (5.90, 8.52, 11.61) 
& (5.90, 8.52, 11.60) 
& (5.904, 8.515, 11.604) \\ 
$\widehat{\textbf{HDH'}}$
& $\begin{bmatrix} 
20.08 &  15.10 &   11.04\\
15.10 &   24.18 &   20.33\\
11.04 &   20.33 &   37.06
\end{bmatrix}$
& $\begin{bmatrix} 
19.84 &  14.90 &   10.84\\
14.90 &   24.36 &   20.41\\
10.84 &   20.41 &   37.45
\end{bmatrix}$
& $\begin{bmatrix} 
19.51 &  14.46 &   10.25\\
14.46 &   23.91 &   20.10\\
10.25 &   20.10 &   37.35
\end{bmatrix}$
& $\begin{bmatrix} 
19.552 &  14.467 &   10.507\\
14.467 &   23.833 &   20.212\\
10.507 &   20.212 &   37.975
\end{bmatrix}$ \\
\hline
\multicolumn{5}{l}{} \\[-2.5ex]
\multicolumn{5}{l}{NB$\, (r=9, \, \beta=1)$, $k = \pi^3$} \\
\hline
$\widehat{means}$
& (5.93, 8.51, 11.56) 
& (5.93, 8.51, 11.56) 
& (5.93, 8.50, 11.55) 
& (5.928, 8.504, 11.554) \\ 
$\widehat{\textbf{HDH'}}$
& $\begin{bmatrix} 
21.65 &  14.95 &   10.99\\
14.95 &   25.66 &   20.11\\
10.99 &   20.11 &   39.37
\end{bmatrix}$
& $\begin{bmatrix} 
19.16 &  14.41 &   10.20\\
14.41 &   27.72 &   20.75\\
10.20 &   20.75 &   40.10
\end{bmatrix}$
& $\begin{bmatrix} 
17.74 &  13.80 &   9.44\\
13.80 &   28.35 &   20.84\\
9.44 &   20.84 &   40.31
\end{bmatrix}$
& $\begin{bmatrix} 
17.673 &  13.777 &   9.675\\
13.777 &   28.408 &   20.920\\
9.675 &   20.920 &   40.813
\end{bmatrix}$ \\
\hline
\multicolumn{5}{l}{} \\[-2.5ex]
\multicolumn{5}{l}{
\scriptsize {\sc Note:} ~The entries for $n < \infty$ are
the averages and sample covariances of estimated quartiles.
} \\[-0.5ex]
\multicolumn{5}{l}{
\scriptsize Results are based on 10,000 simulated samples.
Standard errors of these entries are $\leq 0.01$.
} \\
\end{tabular}}
\label{Table 3.1}
\end{table}
a rapid convergence of the estimated
means and covariance-variance ($\times n$) matrices of smoothed
quartile estimators. Even the entries for $n=100$ are close to
their respective theoretical targets, which are reported in the
column $n = \infty$. Furthermore, as it could be anticipated from
Figure~\ref{Figure 2.1}, the choice of $k$ for these distributions is not
essential. We recommend choosing $k = \pi^2$ as
sufficient for most typical discrete distributions (no zero
inflated cases though). Finally, note that the estimated means
of the estimators look similar for Poisson and NB distributions
while the entries of the covariance-variance matrices differ by
a factor of (roughly) two. This is expected, and it is due to the
choice of parameters of the Poisson and NB distributions.

\subsection{ZIP and ZINB Distributions}
\label{Section 3.3}

The parameters of the ZIP and ZINB distributions are selected to match
the models depicted in Figure~\ref{Figure 2.1} (third column). The choices of
$\lambda = r \beta = 1$ and $c=0.8$ represent realistic insurance
data scenarios. We notice from Table~\ref{Table 3.2}
\begin{table}[h!]
\centering
\caption{Estimated means and covariance-variance
($\times n$) matrices of the smoothed quartile
estimators
$\widehat{Q}^{(k)}_*(0.25)$, $\widehat{Q}^{(k)}_*(0.50)$,
$\widehat{Q}^{(k)}_*(0.75)$ for the ZIP$(\lambda=1, \, c=0.8)$
and ZINB$(r=1, \, \beta=1, \, c=0.8)$
distributions truncated at the points $\overline{Y}
\pm k \, S$ with $k = \pi, \, \pi^2, \, \pi^3$.}
\medskip
{\footnotesize
\begin{tabular}{ |c|ccc|c| }
\hline
 & $n = 10^2$ & $n = 10^3$ & $n = 10^4$ & $n = \infty$ \\
\hline
\multicolumn{5}{l}{} \\[-2.5ex]
\multicolumn{5}{l}{ZIP$\, (\lambda=1, \, c=0.8$), $k = \pi$} \\
\hline
$\widehat{means}$
& (0.01, 0.10, 0.62) 
& (0.01, 0.10, 0.62) 
& (0.01, 0.10, 0.62) 
& (0.006, 0.095, 0.616) \\ 
$\widehat{\textbf{HDH'}}$
& $\begin{bmatrix} 
0.00 &  0.02 &   0.06\\
0.02 &   0.23 &   0.80\\
0.06 &   0.80 &   3.80
\end{bmatrix}$
& $\begin{bmatrix} 
0.00 &  0.02 &   0.06\\
0.02 &   0.22 &   0.69\\
0.06 &   0.69 &   2.50
\end{bmatrix}$
& $\begin{bmatrix} 
0.00 &  0.02 &   0.06\\
0.02 &   0.21 &   0.66\\
0.06 &   0.66 &   2.23
\end{bmatrix}$
& $\begin{bmatrix} 
0.001 &  0.015 &   0.044\\
0.015 &   0.150 &   0.461\\
0.044 &   0.461 &   1.522
\end{bmatrix}$ \\
\hline
\multicolumn{5}{l}{} \\[-2.5ex]
\multicolumn{5}{l}{ZIP$\, (\lambda=1, \, c=0.8$), $k = \pi^2$} \\
\hline
$\widehat{means}$
& (0.00, 0.03, 0.53) 
& (0.00, 0.03, 0.52) 
& (0.00, 0.03, 0.51) 
& (0.000, 0.026, 0.514) \\ 
$\widehat{\textbf{HDH'}}$
& $\begin{bmatrix} 
0.00 &  0.00 &   0.00\\
0.00 &   0.06 &   0.31\\
0.00 &   0.31 &   2.63
\end{bmatrix}$
& $\begin{bmatrix} 
0.00 &  0.00 &   0.00\\
0.00 &   0.05 &   0.31\\
0.00 &   0.31 &   2.88
\end{bmatrix}$
& $\begin{bmatrix} 
0.00 &  0.00 &   0.01\\
0.00 &   0.05 &   0.41\\
0.01 &   0.41 &   3.60
\end{bmatrix}$
& $\begin{bmatrix} 
0.000 &  0.000 &   0.004\\
0.000 &   0.041 &   0.318\\
0.004 &   0.318 &   2.709
\end{bmatrix}$ \\
\hline
\multicolumn{5}{l}{} \\[-2.5ex]
\multicolumn{5}{l}{ZIP$\, (\lambda=1, \, c=0.8$), $k = \pi^3$} \\
\hline
$\widehat{means}$
& (0.00, 0.00, 0.35) 
& (0.00, 0.00, 0.32) 
& (0.00, 0.00, 0.31) 
& (0.000, 0.001, 0.315) \\ 
$\widehat{\textbf{HDH'}}$
& $\begin{bmatrix} 
0.00 &  0.00 &   0.00\\
0.00 &   0.00 &   0.06\\
0.00 &   0.06 &   3.65
\end{bmatrix}$
& $\begin{bmatrix} 
0.00 &  0.00 &   0.00\\
0.00 &   0.00 &   0.02\\
0.00 &   0.02 &   4.07
\end{bmatrix}$
& $\begin{bmatrix} 
0.00 &  0.00 &   0.00\\
0.00 &   0.00 &   0.02\\
0.00 &   0.02 &   3.94
\end{bmatrix}$
& $\begin{bmatrix} 
0.000 &  0.000 &   0.000\\
0.000 &   0.000 &   0.021\\
0.000 &   0.021 &   3.400
\end{bmatrix}$ \\
\hline
\multicolumn{5}{l}{} \\[-2.5ex]
\multicolumn{5}{l}{ZINB$\, (r=1, \, \beta=1, \, c=0.8$), $k = \pi$} \\
\hline
$\widehat{means}$
& (0.00, 0.08, 0.64) 
& (0.00, 0.07, 0.65) 
& (0.00, 0.07, 0.64) 
& (0.003, 0.069, 0.642) \\ 
$\widehat{\textbf{HDH'}}$
& $\begin{bmatrix} 
0.00 &  0.01 &   0.04\\
0.01 &   0.19 &   0.79\\
0.04 &   0.79 &   4.87
\end{bmatrix}$
& $\begin{bmatrix} 
0.00 &  0.01 &   0.03\\
0.01 &   0.18 &   0.72\\
0.03 &   0.72 &   4.40
\end{bmatrix}$
& $\begin{bmatrix} 
0.00 &  0.01 &   0.04\\
0.01 &   0.17 &   0.75\\
0.04 &   0.75 &   3.76
\end{bmatrix}$
& $\begin{bmatrix} 
0.000 &  0.007 &   0.029\\
0.007 &   0.119 &   0.519\\
0.029 &   0.519 &   2.534
\end{bmatrix}$ \\
\hline
\multicolumn{5}{l}{} \\[-2.5ex]
\multicolumn{5}{l}{ZINB$\, (r=1, \, \beta=1, \, c=0.8$), $k = \pi^2$} \\
\hline
$\widehat{means}$
& (0.00, 0.02, 0.53) 
& (0.00, 0.01, 0.50) 
& (0.00, 0.01, 0.49) 
& (0.000, 0.012, 0.489) \\ 
$\widehat{\textbf{HDH'}}$
& $\begin{bmatrix} 
0.00 &  0.00 &   0.00\\
0.00 &   0.04 &   0.28\\
0.00 &   0.28 &   3.96
\end{bmatrix}$
& $\begin{bmatrix} 
0.00 &  0.00 &   0.00\\
0.00 &   0.02 &   0.24\\
0.00 &   0.24 &   4.25
\end{bmatrix}$
& $\begin{bmatrix} 
0.00 &  0.00 &   0.00\\
0.00 &   0.02 &   0.29\\
0.00 &   0.29 &   4.74
\end{bmatrix}$
& $\begin{bmatrix} 
0.000 &  0.000 &   0.001\\
0.000 &   0.014 &   0.223\\
0.001 &   0.223 &   3.781
\end{bmatrix}$ \\
\hline
\multicolumn{5}{l}{} \\[-2.5ex]
\multicolumn{5}{l}{ZINB$\, (r=1, \, \beta=1, \, c=0.8$), $k = \pi^3$} \\
\hline
$\widehat{means}$
& (0.00, 0.00, 0.33) 
& (0.00, 0.00, 0.28) 
& (0.00, 0.00, 0.27) 
& (0.000, 0.000, 0.270) \\ 
$\widehat{\textbf{HDH'}}$
& $\begin{bmatrix} 
0.00 &  0.00 &   0.00\\
0.00 &   0.00 &   0.03\\
0.00 &   0.03 &   4.83
\end{bmatrix}$
& $\begin{bmatrix} 
0.00 &  0.00 &   0.00\\
0.00 &   0.00 &   0.01\\
0.00 &   0.01 &   5.22
\end{bmatrix}$
& $\begin{bmatrix} 
0.00 &  0.00 &   0.00\\
0.00 &   0.00 &   0.00\\
0.00 &   0.00 &   5.11
\end{bmatrix}$
& $\begin{bmatrix} 
0.000 &  0.000 &   0.000\\
0.000 &   0.000 &   0.003\\
0.000 &   0.003 &   4.155
\end{bmatrix}$ \\
\hline
\multicolumn{5}{l}{} \\[-2.5ex]
\multicolumn{5}{l}{
\scriptsize {\sc Note:} ~The entries for $n < \infty$ are
the averages and sample covariances of estimated quartiles.
} \\[-0.5ex]
\multicolumn{5}{l}{
\scriptsize Results are based on 10,000 simulated samples.
Standard errors of these entries are $\leq 0.01$.
} \\
\end{tabular}}
\label{Table 3.2}
\end{table}
that convergence of the estimated means
and covariance-variance ($\times n$) matrices of smoothed quartile
estimators is not as fast as that of the Poisson and NB distributions.
It also depends on the width of the truncation interval. While the
choice of $k = \pi^2$ yields reasonable results, $k = \pi^3$ offers
an improvement. This observation agrees with the recommendation based
on Figure~\ref{Figure 2.1}. In addition, note that the estimated means of all
quartile estimators are shrinking toward zero. This is supposed
to happen because all three quartile levels are below $c = 0.8$.
Naturally, the mean estimates that are close to zero result in
similar values (almost 0) of the covariance-variance estimates.

\section{Bootstrap Approximation}
\label{Section 4}

Using simulations in Section~\ref{Section 3}, we have illustrated the
statements of Theorem~\ref{Theorem 2.1}. In the current section, we shall further harness the power of
computers and construct a bootstrap algorithm that will help us to
approximate the results of Section~\ref{Section 2}. Note that if properly designed,
bootstrap procedures can be used to approximate even more challenging
risk measurement tasks than smoothing of discrete quantiles.
In Section~\ref{Section 4.1}, the algorithm for bootstrap estimation is outlined.
In Section~\ref{Section 4.2}, the performance of the algorithm is validated and
cross-checked with Theorem~\ref{Theorem 2.1} for the Poisson, NB, ZIP, and ZINB
distributions.

\subsection{The Algorithm}
\label{Section 4.1}

The bootstrap algorithm requires specifications of the following inputs:
\begin{itemize}
\item
\textit{Data} is a sample generated by some distribution
\item
$n$ is the sample size
\item
$m$ is the number of bootstrapped resamples
\item
$k$ is the number of standard deviations in interval~\eqref{chebyshev}
\item
$u_1, \ldots, u_l$ are the quantile levels of vectors~\eqref{popQu2}
and \eqref{empQu2}
\end{itemize}
In the description of the algorithm, we use $\boldsymbol{y}$ to denote resampled data, whose sample mean and the sample standard deviation we denote by $\overline{\boldsymbol{y}}$ and $s_{\boldsymbol{y}}$, respectively. Furthermore, given any $y\in \boldsymbol{y}$, we use the notation $\mbox{freq}(y)$ for the number of $y$'s in the data set  $\boldsymbol{y}$.

\begin{figure}[h!]
\begin{minipage}{17cm}
\begin{enumerate}
  \item[] \underline{\hspace{\linewidth}}

\vspace{-1ex}

  \item[] {\bf ~ Algorithm for Bootstrap Approximation}

\vspace{-3ex}

  \item[] \underline{\hspace{\linewidth}}

\vspace{-1ex}

  \item[] {\bf Input} ~Data; ~$n$; ~$m$; ~$k$; ~$u_1, \ldots, u_l$
  \item[] \mbox{\bf{for}} ~$i=1, \ldots, m$~ \mbox{\bf{do}}
  \begin{enumerate}
    \item[] ResampledData $(=:\boldsymbol{y})$ = bootstrap(Data, with Replacement)
    \item[] $\widehat{L}_k$ = max($-0.5, \, \overline{\boldsymbol{y}}-ks_{\boldsymbol{y}}$),
\quad $\widehat{U}_k = \overline{\boldsymbol{y}}+ks_{\boldsymbol{y}}$
    \item[] $\widehat{y}_{1:\widehat{d}_k}$ = ceiling($\widehat{L}_k$),
\quad $\widehat{y}_{\widehat{d}_k:\widehat{d}_k}$ = floor($\widehat{U}_k$)
\hfill / see \eqref{emp-int} /
    \item[] $\widehat{d}_k = \widehat{y}_{\widehat{d}_k:\widehat{d}_k}-
\widehat{y}_{1:\widehat{d}_k}+1$,
\quad $\widehat{y}_{j:\widehat{d}_k} = \widehat{y}_{1:\widehat{d}_k}+j-1$
    \item[] \mbox{\bf{for}} ~$j=1, \ldots, \widehat{d}_k$~ \mbox{\bf{do}}
    \begin{enumerate}
      \item[] $\widehat{F}_{j(k)}$ =
$\left(\mbox{freq}(\widehat{y}_{1:\widehat{d}_k}) + \cdots +
\mbox{freq}(\widehat{y}_{j:\widehat{d}_k})\right) \Big/ n$
    \end{enumerate}
    \item[] \mbox{\bf{end for}}
    \item[] $\widehat{F}_{Y}(\widehat{L}_k) =
\left(\mbox{freq}\big(y\in \boldsymbol{y} \mbox{ such that }  y<\widehat{y}_{1:\widehat{d}_k}\big)\right) \Big/ n$
    \item[] $\widehat{F}_{Y}(\widehat{U}_k) = \widehat{F}_{Y}(\widehat{L}_k) +
\left(\mbox{freq}(\widehat{y}_{1:\widehat{d}_k}) + \cdots +
\mbox{freq}(\widehat{y}_{\widehat{d}_k:\widehat{d}_k})\right) \Big/ n$
    \item[] $\widehat{F}_{j(k)}^{\ast} = \left(
\widehat{F}_{j(k)}-\widehat{F}_{Y}(\widehat{L}_k) \right) \Big/
\left(\widehat{F}_{Y}(\widehat{U}_k)-\widehat{F}_{Y}(\widehat{L}_k)\right)$
\hfill / see \eqref{empTRnew} /
    \item[] \mbox{\bf{for}} ~$u=u_1, \ldots, u_l$~ \mbox{\bf{do}}
    \begin{enumerate}
      \item[] $\widehat{\alpha}_u = (\widehat{d}_k+1)u$,
\quad $\widehat{\beta}_u = (\widehat{d}_k+1)(1-u)$
      \item[] $\widehat{Q}^{(k)}_*(u) = \sum_{j=1}^{\widehat{d}_k}
\Big( B_{\widehat{\alpha}_u, \widehat{\beta}_u}(\widehat{F}^*_{j(k)}) -
B_{\widehat{\alpha}_u, \widehat{\beta}_u}(\widehat{F}^*_{j(k)-1}) \Big)
\widehat{y}_{j:\widehat{d}_k}$
\hfill / see \eqref{empQu} /
    \end{enumerate}
    \item[] \mbox{\bf{end for}}
    \item[] Store $\widehat{Q}^{(k)}_*(u)$ in $m\times l$ matrix
  \end{enumerate}
  \item[] \mbox{\bf{end for}}
  \item[] $\mbox{\bf{column Means}}(\bf{\widehat{Q}^{(k)}_*(u))}$ =
colMeans($m\times l$ matrix)
  \item[] $\bf{\widehat{\Sigma}(\widehat{Q}^{(k)}_*(u))}$ =
cov($m\times l$ matrix)

\vspace{-3ex}

  \item[] \underline{\hspace{\linewidth}}
\end{enumerate}
\end{minipage}
\end{figure}

\subsection{Validation of the Algorithm}
\label{Section 4.2}

According to the bootstrap algorithm of Section~\ref{Section 4.1}, and given a sample of
size $n$ from some distribution, the sample is empirically resampled
with replacement (bootstrapped) and the smoothed quartile estimates
are computed. This cycle is repeated 10,000 times. The means and
covariance-variance matrices of the 10,000 estimates are then
computed and summarized.
The results are reported in Tables~\ref{Table 4.1}--\ref{Table 4.2}.
\begin{table}[h!]
\centering
\caption{Bootstrapped means and covariance-variance
($\times n$) matrices of the smoothed quartile
estimators
$\widehat{Q}^{(k)}_*(0.25)$, $\widehat{Q}^{(k)}_*(0.50)$,
$\widehat{Q}^{(k)}_*(0.75)$ for the Poisson$(\lambda=9)$
and NB$(r=9, \, \beta=1)$
distributions truncated at the points $\overline{y}
\pm k \, s$ with $k = \pi, \, \pi^2, \, \pi^3$.}
\medskip
{\footnotesize
\begin{tabular}{ |c|ccc|c| }
\hline
 & $n=10^2$ & $n=10^3$ & $n=10^4$ & $n = \infty$ \\
\hline
\multicolumn{5}{l}{} \\[-2.5ex]
\multicolumn{5}{l}{Poisson$\, (\lambda=9)$, $k = \pi$} \\
\hline
$\widehat{means}$
& (7.29, 9.06, 10.98) 
& (6.87, 8.80, 11.05) 
& (6.81, 8.82, 11.08) 
& (6.815, 8.835, 11.021) \\ 
$\widehat{\textbf{HDH'}}$
& $\begin{bmatrix} 
9.51 &  6.90 &   3.79\\
6.90 &   10.49 &   7.26\\
3.79 &   7.26 &   8.46
\end{bmatrix}$
& $\begin{bmatrix} 
11.13 &  7.73 &   5.58\\
7.73 &   10.89 &   9.80\\
5.58 &   9.80 &   16.04
\end{bmatrix}$
& $\begin{bmatrix} 
11.95 &  8.58 &   5.84\\
8.58 &   11.92 &   10.45\\
5.84 &   10.45 &   16.46
\end{bmatrix}$
& $\begin{bmatrix} 
11.367 &  8.360 &   5.539\\
8.360 &   11.497 &   9.753\\
5.539 &   9.753 &   15.478
\end{bmatrix}$ \\
\hline
\multicolumn{5}{l}{} \\[-2.5ex]
\multicolumn{5}{l}{Poisson$\, (\lambda=9)$, $k = \pi^2$} \\
\hline
$\widehat{means}$
& (7.37, 9.03, 10.95) 
& (6.91, 8.78, 11.01) 
& (6.86, 8.81, 11.04) 
& (6.856, 8.838, 10.982) \\ 
$\widehat{\textbf{HDH'}}$
& $\begin{bmatrix} 
10.63 &  7.32 &   3.44\\
7.32 &   12.75 &   7.08\\
3.44 &   7.08 &   7.83
\end{bmatrix}$
& $\begin{bmatrix} 
13.30 &  8.28 &   5.71\\
8.28 &   12.09 &   9.52\\
5.71 &   9.52 &   15.87
\end{bmatrix}$
& $\begin{bmatrix} 
12.83 &  8.59 &   5.88\\
8.59 &   12.80 &   10.45\\
5.88 &   10.45 &   17.71
\end{bmatrix}$
& $\begin{bmatrix} 
12.153 &  8.309 &   5.526\\
8.309 &   12.289 &   9.714\\
5.526 &   9.714 &   16.579
\end{bmatrix}$ \\
\hline
\multicolumn{5}{l}{} \\[-2.5ex]
\multicolumn{5}{l}{Poisson$\, (\lambda=9)$, $k = \pi^3$} \\
\hline
$\widehat{means}$
& (7.44, 9.01, 10.95) 
& (6.93, 8.77, 10.97) 
& (6.89, 8.82, 11.00) 
& (6.893, 8.853, 10.951) \\ 
$\widehat{\textbf{HDH'}}$
& $\begin{bmatrix} 
12.76 &  8.25 &   3.15\\
8.25 &   15.14 &   6.36\\
3.15 &   6.36 &   6.71
\end{bmatrix}$
& $\begin{bmatrix} 
14.63 &  9.08 &   5.25\\
9.08 &   14.10 &   8.54\\
5.25 &   8.54 &   13.40
\end{bmatrix}$
& $\begin{bmatrix} 
11.29 &  7.72 &   5.11\\
7.72 &   13.10 &   9.55\\
5.11 &   9.55 &   17.36
\end{bmatrix}$
& $\begin{bmatrix} 
10.533 &  7.033 &   4.695\\
7.033 &   11.401 &   8.415\\
4.695 &   8.415 &   15.631
\end{bmatrix}$ \\
\hline
\multicolumn{5}{l}{} \\[-2.5ex]
\multicolumn{5}{l}{NB$\, (r=9, \, \beta=1)$, $k = \pi$} \\
\hline
$\widehat{means}$
& (5.68, 8.08, 10.42) 
& (5.84, 8.40, 11.78) 
& (5.82, 8.44, 11.58) 
& (5.859, 8.504, 11.628) \\ 
$\widehat{\textbf{HDH'}}$
& $\begin{bmatrix} 
17.82 &  11.60 &   8.80\\
11.60 &   13.45 &   12.47\\
8.80 &   12.47 &   25.04
\end{bmatrix}$
& $\begin{bmatrix} 
17.14 &  13.94 &   12.20\\
13.94 &   21.49 &   23.88\\
12.20 &   23.88 &   48.24
\end{bmatrix}$
& $\begin{bmatrix} 
17.21 &  13.97 &   10.42\\
13.97 &   21.70 &   20.33\\
10.42 &   20.33 &   35.85
\end{bmatrix}$
& $\begin{bmatrix} 
18.038 &  14.458 &   10.384\\
14.458 &   22.085 &   20.054\\
10.384 &   20.054 &   34.815
\end{bmatrix}$ \\
\hline
\multicolumn{5}{l}{} \\[-2.5ex]
\multicolumn{5}{l}{NB$\, (r=9, \, \beta=1)$, $k = \pi^2$} \\
\hline
$\widehat{means}$
& (5.74, 8.11, 10.25) 
& (5.89, 8.38, 11.72) 
& (5.87, 8.44, 11.55) 
& (5.904, 8.515, 11.604) \\ 
$\widehat{\textbf{HDH'}}$
& $\begin{bmatrix} 
18.13 &  10.75 &   8.17\\
10.75 &   13.65 &   11.62\\
8.17 &   11.62 &   24.02
\end{bmatrix}$
& $\begin{bmatrix} 
16.26 &  12.38 &   11.39\\
12.38 &   20.91 &   22.39\\
11.39 &   22.39 &   51.69
\end{bmatrix}$
& $\begin{bmatrix} 
17.84 &  13.48 &   10.22\\
13.48 &   22.77 &   19.74\\
10.22 &   19.74 &   37.65
\end{bmatrix}$
& $\begin{bmatrix} 
19.552 &  14.467 &   10.507\\
14.467 &   23.833 &   20.212\\
10.507 &   20.212 &   37.975
\end{bmatrix}$ \\
\hline
\multicolumn{5}{l}{} \\[-2.5ex]
\multicolumn{5}{l}{NB$\, (r=9, \, \beta=1)$, $k = \pi^3$} \\
\hline
$\widehat{means}$
& (5.79, 8.13, 10.13) 
& (5.93, 8.35, 11.67) 
& (5.90, 8.42, 11.51) 
& (5.928, 8.504, 11.554) \\ 
$\widehat{\textbf{HDH'}}$
& $\begin{bmatrix} 
16.54 &  8.81 &   6.47\\
8.81 &   13.06 &   9.81\\
6.47 &   9.81 &   20.80
\end{bmatrix}$
& $\begin{bmatrix} 
11.35 &  9.61 &   9.30\\
9.61 &   22.65 &   22.73\\
9.30 &   22.73 &   58.05
\end{bmatrix}$
& $\begin{bmatrix} 
15.05 &  12.05 &   8.99\\
12.05 &   26.48 &   20.26\\
8.99 &   20.26 &   40.64
\end{bmatrix}$
& $\begin{bmatrix} 
17.673 &  13.777 &   9.675\\
13.777 &   28.408 &   20.920\\
9.675 &   20.920 &   40.813
\end{bmatrix}$ \\
\hline
\multicolumn{5}{l}{} \\[-2.5ex]
\multicolumn{5}{l}{
\scriptsize {\sc Note:} ~The entries for $n < \infty$ are
the averages and sample covariances of estimated quartiles.
} \\[-0.5ex]
\multicolumn{5}{l}{
\scriptsize Results are based on 10,000 bootstrap resamples.
} \\
\end{tabular}}
\label{Table 4.1}
\end{table}

\begin{table}[h!]
\centering
\caption{Bootstrapped means and covariance-variance
($\times n$) matrices of the smoothed quartile
estimators
$\widehat{Q}^{(k)}_*(0.25)$, $\widehat{Q}^{(k)}_*(0.50)$,
$\widehat{Q}^{(k)}_*(0.75)$ for the ZIP$(\lambda=1, \, c=0.8)$
and ZINB$(r=1, \, \beta=1, \, c=0.8)$
distributions truncated at the points $\overline{y}
\pm k \, s$ with $k = \pi, \, \pi^2, \, \pi^3$.}
\medskip
{\footnotesize
\begin{tabular}{ |c|ccc|c| }
\hline
 & $n = 10^2$ & $n = 10^3$ & $n = 10^4$ & $n = \infty$ \\
\hline
\multicolumn{5}{l}{} \\[-2.5ex]
\multicolumn{5}{l}{ZIP$\, (\lambda=1, \, c=0.8$), $k = \pi$} \\
\hline
$\widehat{means}$
& (0.00, 0.08, 0.54) 
& (0.01, 0.09, 0.58) 
& (0.01, 0.10, 0.62) 
& (0.006, 0.095, 0.616) \\ 
$\widehat{\textbf{HDH'}}$
& $\begin{bmatrix} 
0.00 &  0.01 &   0.03\\
0.01 &   0.13 &   0.57\\
0.03 &   0.57 &   3.96
\end{bmatrix}$
& $\begin{bmatrix} 
0.00 &  0.01 &   0.04\\
0.01 &   0.14 &   0.43\\
0.04 &   0.43 &   1.46
\end{bmatrix}$
& $\begin{bmatrix} 
0.00 &  0.02 &   0.05\\
0.02 &   0.16 &   0.48\\
0.05 &   0.48 &   1.56
\end{bmatrix}$
& $\begin{bmatrix} 
0.001 &  0.015 &   0.044\\
0.015 &   0.150 &   0.461\\
0.044 &   0.461 &   1.522
\end{bmatrix}$ \\
\hline
\multicolumn{5}{l}{} \\[-2.5ex]
\multicolumn{5}{l}{ZIP$\, (\lambda=1, \, c=0.8$), $k = \pi^2$} \\
\hline
$\widehat{means}$
& (0.00, 0.03, 0.52) 
& (0.00, 0.03, 0.49) 
& (0.00, 0.03, 0.52) 
& (0.000, 0.026, 0.514) \\ 
$\widehat{\textbf{HDH'}}$
& $\begin{bmatrix} 
0.00 &  0.00 &   0.00\\
0.00 &   0.04 &   0.22\\
0.00 &   0.22 &   2.12
\end{bmatrix}$
& $\begin{bmatrix} 
0.00 &  0.00 &   0.00\\
0.00 &   0.03 &   0.19\\
0.00 &   0.19 &   1.64
\end{bmatrix}$
& $\begin{bmatrix} 
0.00 &  0.00 &   0.00\\
0.00 &   0.05 &   0.33\\
0.00 &   0.33 &   2.73
\end{bmatrix}$
& $\begin{bmatrix} 
0.000 &  0.000 &   0.004\\
0.000 &   0.041 &   0.318\\
0.004 &   0.318 &   2.709
\end{bmatrix}$ \\
\hline
\multicolumn{5}{l}{} \\[-2.5ex]
\multicolumn{5}{l}{ZIP$\, (\lambda=1, \, c=0.8$), $k = \pi^3$} \\
\hline
$\widehat{means}$
& (0.00, 0.00, 0.36) 
& (0.00, 0.00, 0.29) 
& (0.00, 0.00, 0.33) 
& (0.000, 0.001, 0.315) \\ 
$\widehat{\textbf{HDH'}}$
& $\begin{bmatrix} 
0.00 &  0.00 &   0.00\\
0.00 &   0.00 &   0.03\\
0.00 &   0.03 &   3.11
\end{bmatrix}$
& $\begin{bmatrix} 
0.00 &  0.00 &   0.000\\
0.00 &   0.00 &   0.02\\
0.00 &   0.02 &   2.57
\end{bmatrix}$
& $\begin{bmatrix} 
0.00 &  0.00 &   0.000\\
0.00 &   0.00 &   0.02\\
0.00 &   0.02 &   3.12
\end{bmatrix}$
& $\begin{bmatrix} 
0.000 &  0.000 &   0.000\\
0.000 &   0.000 &   0.021\\
0.000 &   0.021 &   3.400
\end{bmatrix}$ \\
\hline
\multicolumn{5}{l}{} \\[-2.5ex]
\multicolumn{5}{l}{ZINB$\, (r=1, \, \beta=1, \, c=0.8$), $k = \pi$} \\
\hline
$\widehat{means}$
& (0.00, 0.06, 0.46) 
& (0.00, 0.06, 0.58) 
& (0.00, 0.07, 0.64) 
& (0.003, 0.069, 0.642) \\ 
$\widehat{\textbf{HDH'}}$
& $\begin{bmatrix} 
0.00 &  0.01 &   0.02\\
0.01 &   0.08 &   0.32\\
0.02 &   0.32 &   1.72
\end{bmatrix}$
& $\begin{bmatrix} 
0.00 &  0.01 &   0.02\\
0.01 &   0.10 &   0.42\\
0.02 &   0.42 &   2.49
\end{bmatrix}$
& $\begin{bmatrix} 
0.00 &  0.01 &   0.03\\
0.01 &   0.12 &   0.53\\
0.03 &   0.53 &   2.62
\end{bmatrix}$
& $\begin{bmatrix} 
0.000 &  0.007 &   0.029\\
0.007 &   0.119 &   0.519\\
0.029 &   0.519 &   2.534
\end{bmatrix}$ \\
\hline
\multicolumn{5}{l}{} \\[-2.5ex]
\multicolumn{5}{l}{ZINB$\, (r=1, \, \beta=1, \, c=0.8$), $k = \pi^2$} \\
\hline
$\widehat{means}$
& (0.00, 0.01, 0.36) 
& (0.00, 0.01, 0.47) 
& (0.00, 0.01, 0.48) 
& (0.000, 0.012, 0.489) \\ 
$\widehat{\textbf{HDH'}}$
& $\begin{bmatrix} 
0.00 &  0.00 &   0.00\\
0.00 &   0.01 &   0.01\\
0.00 &   0.01 &   1.42
\end{bmatrix}$
& $\begin{bmatrix} 
0.00 &  0.00 &   0.00\\
0.00 &   0.01 &   0.15\\
0.00 &   0.15 &   2.52
\end{bmatrix}$
& $\begin{bmatrix} 
0.00 &  0.00 &   0.00\\
0.00 &   0.02 &   0.20\\
0.00 &   0.20 &   3.17
\end{bmatrix}$
& $\begin{bmatrix} 
0.000 &  0.000 &   0.001\\
0.000 &   0.014 &   0.223\\
0.001 &   0.223 &   3.781
\end{bmatrix}$ \\
\hline
\multicolumn{5}{l}{} \\[-2.5ex]
\multicolumn{5}{l}{ZINB$\, (r=1, \, \beta=1, \, c=0.8$), $k = \pi^3$} \\
\hline
$\widehat{means}$
& (0.00, 0.00, 0.17) 
& (0.00, 0.00, 0.25) 
& (0.00, 0.00, 0.26) 
& (0.000, 0.000, 0.270) \\ 
$\widehat{\textbf{HDH'}}$
& $\begin{bmatrix} 
0.00 &  0.00 &   0.00\\
0.00 &   0.00 &   0.00\\
0.00 &   0.00 &   1.40
\end{bmatrix}$
& $\begin{bmatrix} 
0.00 &  0.00 &   0.00\\
0.00 &   0.00 &   0.00\\
0.00 &   0.00 &   3.05
\end{bmatrix}$
& $\begin{bmatrix} 
0.00 &  0.00 &   0.00\\
0.00 &   0.00 &   0.00\\
0.00 &   0.00 &   3.45
\end{bmatrix}$
& $\begin{bmatrix} 
0.000 &  0.000 &   0.000\\
0.000 &   0.000 &   0.003\\
0.000 &   0.003 &   4.155
\end{bmatrix}$ \\
\hline
\multicolumn{5}{l}{} \\[-2.5ex]
\multicolumn{5}{l}{
\scriptsize {\sc Note:} ~The entries for $n < \infty$ are
the averages and sample covariances of estimated quartiles.
} \\[-0.5ex]
\multicolumn{5}{l}{
\scriptsize Results are based on 10,000 bootstrap resamples.
} \\
\end{tabular}}
\label{Table 4.2}
\end{table}

As we see from the tables, the bootstrap algorithm
approximates the theoretical values established in Theorem~\ref{Theorem 2.1}
reasonably well. For the Poisson and NB distributions, $n \geq 100$
with $k = \pi^2$ is sufficient in most cases. For the ZIP and ZINB
distributions, $n=100$ may be too small, even with $k = \pi^3$, but
for $n=1000$ and larger sample sizes, the algorithm performs well. Note that
to save space in the tables, the entries of the covariance-variance
matrices are multiplied by $n$. These numbers then may give the
misleading impression that the discrepancies between the bootstrap
and theoretical approximations are large, but it can be checked
that they are not.
For example, in Table~\ref{Table 4.1} for $n=100$, Poisson, $k=\pi^2$,
and the matrix entry $(3,3)$, we actually have
$7.83/100 = 0.0783$ (bootstrap) and
$16.579/100 = 0.1658$ (theoretical).
In Table~\ref{Table 4.2} for $n=1000$, ZINB, $k=\pi^3$,
and the matrix entry $(3,3)$, we actually have
$3.05/1000 = 0.0031$ (bootstrap) and
$4.155/1000 = 0.0042$ (theoretical).

\section{Real Data Examples}
\label{Section 5}

In this section, the newly developed methodology is applied to
automobile accident data set \citep{kpw12} and its three
modifications.
Specifically, in Section~\ref{Section 5.1}, the data sets are presented and described.
In Section~\ref{Section 5.2}, conditional five number summaries are computed
for the four data sets and supplemented with 95\% (pointwise)
confidence intervals.
In Section~\ref{Section 5.3}, point estimates of a few tail probabilities are
evaluated using the traditional discrete probabillity approximation
as well as the new smoothed approach.

\subsection{Data Sets}
\label{Section 5.1}

The automobile accident data \citep[][Table 6.2]{kpw12} represent
the risk profile of 9,461 insurance policies. Following the numerical
examples of \citet[][Section 5.3]{br23}, we also consider three tail
modifications of this data set. Specifically, we take 140 policies
(corresponding to about 1.5\% of the portfolio) that report 0
accidents and replace them with 140 policies that report at least
2 accidents; this results in three different scenarios. In Table~\ref{Table 5.1},
\begin{table}[h!]
\centering
\caption{Original and modified data sets consisting of the numbers of
accidents per policy.}
\medskip
\begin{tabular}{|c|ccccccccc|c|}
\hline
Data Set &
\multicolumn{9}{|c|}{Number of Accidents} & Total Number \\[-0.5ex]
 & 0 & 1 & 2 & 3 & 4 & 5 & 6 & 7 & $\geq 8$ & of Policies \\
\hline
\hline
O (original) &
7,840 & 1,317 & 239 & 42 & 14 & 4 & 4 & 1 & 0 & 9,461 \\
M1 (modified \#1) & {\em 7,700} & 1,317 & {\em 379} & 42 & 14 &
4 & 4 & 1 & 0 & 9,461 \\
M2 (modified \#2) & {\em 7,700} & 1,317 & {\em 279} & {\em 62} &
{\em 34} & {\em 24} & {\em 24} & {\em 21} & 0 & 9,461 \\
M3 (modified \#3) & {\em 7,700} & 1,317 & 239 & 42 & 14 & 4 & 4 &
{\em 141} & 0 & 9,461 \\
\hline
\end{tabular}
\label{Table 5.1}
\end{table}
the original and the three modified data sets are provided. The modified
counts of policies are {\em italicized\/}.

At first glance, M1, M2, M3 appear to be riskier portfolios than the
original data set O. Also noticeable is a progression from the least
risky (M1) to the most risky portfolio (M3). The goal of our subsequent
computations is to check if these preliminary observations are
supported by the new methodology.

\subsection{Conditional Five Number Summaries}
\label{Section 5.2}

To illustrate how the joint behavior of smoothed quantiles helps to
assess tail riskiness of portfolios, we use the data sets of Table~\ref{Table 5.1}
and perform C5NS computations. The results are reported in Table~\ref{Table 5.2}.
\begin{table}[h!]
\centering
\caption{C5NS beyond VaR$_{0.90}$ (with 95\% confidence
intervals in parentheses)
for the original and modified automobile data sets.}
\medskip
\begin{tabular}{|c|ccccc|}
\hline
Data &
\multicolumn{5}{|c|}{Summarizing Quantiles (above the VaR$_{0.90}$ level)} \\[-0.5ex]
Set & $q_{0.91}$ & $q_{0.925}$ & $q_{0.95}$ & $q_{0.975}$ & $q_{0.99}$ \\
\hline
\hline
O & 1.35 {\footnotesize $(1.28; \, 1.41)$} & 1.60 {\footnotesize $(1.51; \, 1.68)$} &
2.28 {\footnotesize $(2.14; \, 2.43)$} & 3.70 {\footnotesize $(3.48; \, 3.92)$} &
5.33 {\footnotesize $(5.15; \, 5.50)$} \\
M1 & 1.47 {\footnotesize $(1.40; \, 1.53)$} & 1.71 {\footnotesize $(1.63; \, 1.80)$} &
2.38 {\footnotesize $(2.24; \, 2.52)$} & 3.76 {\footnotesize $(3.54; \, 3.97)$} &
5.35 {\footnotesize $(5.17; \, 5.52)$} \\
M2 & 1.86 {\footnotesize $(1.76; \, 1.96)$} & 2.25 {\footnotesize $(2.13; \, 2.37)$} &
3.19 {\footnotesize $(3.05; \, 3.34)$} & 4.69 {\footnotesize $(4.56; \, 4.82)$} &
5.96 {\footnotesize $(5.89; \, 6.04)$} \\
M3 & 2.30 {\footnotesize $(2.16; \, 2.43)$} & 2.79 {\footnotesize $(2.64; \, 2.93)$} &
3.85 {\footnotesize $(3.69; \, 4.00)$} & 5.26 {\footnotesize $(5.15; \, 5.37)$} &
6.27 {\footnotesize $(6.22; \, 6.33)$} \\
\hline
\multicolumn{6}{l}{\footnotesize
{\sc Note:} ~Results are based on Theorem~\ref{Theorem 2.1},
with the truncation points $\overline{Y} \pm \pi^3 S$.} \\
\end{tabular}
\label{Table 5.2}
\end{table}

We see from the table that the C5NS approach supports the intuitive
conclusions about O, M1, M2, and M3 (see Section~\ref{Section 5.1}). Indeed, as
the 140 policies that have 0 accidents in the original data set O report higher
numbers of accidents (all of them have 2 accidents in M1 and 7 in M3),
the five quantiles used in C5NS gradually and simultaneously increase.
Also, the associated confidence intervals are relatively narrow and
in general do not overlap (except the intervals for O and M1). This
implies that the portfolios could be statistically separated and
classified as follows: O is the least risky, M1 is somewhat riskier
than O, M2 is significantly riskier than M1, and M3 is the most risky.
Of course, for this type of statistical inference the joint asymptotic
normality of the quantiles was not used. Such a result would be needed,
however, if one decided to combine the quantiles by, for example,
taking a weighted average of them.

\subsection{Tail Probabilities}
\label{Section 5.3}

To demonstrate the advantages of the smoothed approach over the commonly
used linear interpolation \citep[][Section 13.1]{kpw12}, we estimate
several tail probabilities and evaluate the standard error and the coefficient
of variation (CV) of those estimates. Specifically, for the smoothed
variable $Y^*$, we compute a tail probability by first inverting the
quantile function and then evaluating $\mathbf{P}\{Y^* > a^*\}$ directly
if $a^*$ is non-integer or by applying the 0.5 continuity correction
 if $a^*$ is an integer \citep[][Section 5.4]{br23}.
The variability of such estimates could be assessed by inverting
the results of Theorem 2.1, but we will rely on the bootstrap
algorithm (Section~\ref{Section 4.1}) which yields practically equivalent
results (Section~\ref{Section 4.2}) but is easier to implement. For the
discrete variable $Y$, the tail probability of exceeding an integer
threshold can be estimated directly. For non-integer thresholds,
linear interpolation of the probabilities at two adjacent integers
is used. For example,
$\mathbf{P}\{Y > 1.29\} = 0.71 \, \mathbf{P} \{Y > 1\} + 0.29 \,
\mathbf{P} \{Y > 2\}$. The variability measures of the estimates
are also evaluated by employing the bootstrap approach. The results
of these calculations are summarized in Table~\ref{Table 5.3}.
\begin{table}[h!]
\centering
\caption{Tail probabilities for the original and modified data sets.
Bootstrap estimates of the mean, standard deviation, and coefficient of
variation
based on the discrete ($Y$) and smoothed ($Y^*$) approximations.}
\medskip
\begin{tabular}{|c|c|cc|cc|cc|}
\hline
Data & Estimated &
\multicolumn{6}{|c|}{$\mathbf{P}\{Y > a\}$ versus $\mathbf{P}\{Y^* > a^*\}$} \\[-0.25ex]
\cline{3-8}
Set & Quantity &
$a = 0$ & $a^* = 0.5$ & $a = 0.21$ & $a^* = 0.21$ & $a = 1.29$ & $a^* = 1.29$ \\
\hline
\hline
O & Mean     & 0.172 & 0.208 & 0.142 & 0.301 & 0.025 & 0.095 \\
  & Std. Dev. & 0.004 & 0.004 & 0.003 & 0.006 & 0.001 & 0.003 \\
  & CV       & 0.023 & 0.021 & 0.023 & 0.021 & 0.057 & 0.031 \\
\hline
M1 & Mean     & 0.186 & 0.226 & 0.157 & 0.321 & 0.035 & 0.105 \\
   & Std. Dev. & 0.004 & 0.005 & 0.003 & 0.007 & 0.002 & 0.003 \\
   & CV       & 0.022 & 0.021 & 0.022 & 0.021 & 0.046 & 0.028 \\
\hline
M2 & Mean     & 0.186 & 0.226 & 0.157 & 0.318 & 0.038 & 0.122 \\
   & Std. Dev. & 0.004 & 0.004 & 0.003 & 0.004 & 0.002 & 0.003 \\
   & CV       & 0.022 & 0.016 & 0.022 & 0.014 & 0.046 & 0.021 \\
\hline
M3 & Mean     & 0.186 & 0.231 & 0.157 & 0.319 & 0.040 & 0.137 \\
   & Std. Dev. & 0.004 & 0.004 & 0.003 & 0.004 & 0.002 & 0.003 \\
   & CV       & 0.022 & 0.015 & 0.022 & 0.014 & 0.047 & 0.021 \\
\hline
\multicolumn{8}{l}{\footnotesize {\sc Note:} ~Results are based
on 1,000 bootstrapped resamples, with the truncation points
$\overline{Y} \pm \pi^3 S$.} \\
\end{tabular}
\label{Table 5.3}
\end{table}

In the table, the probabilities $\mathbf{P}\{Y > 0\}$ and $\mathbf{P}\{Y^* > 0.5\}$
measure the chance of at least one claim. The numbers $a =  a^* = 0.21$
and $a = a^* = 1.29$ represent the events of exceeding the mean and the
mean plus two standard deviations, respectively, of the number of accidents
in the original portfolio O. (Note that for O, the mean is 0.21
and the standard deviation is 0.54.) Two observations about the smoothed
quantile approach can be made: first, it is more conservative, as it yields
higher tail probability estimates than the standard discrete variable
methodology, and second, it is more precise, as the coefficients of
variation of the ``smoothed'' estimates are always smaller than those
of the ``discrete'' estimates.

\section{Summary and Concluding Remarks}
\label{Section 6}

In this paper, we have studied the simultaneous estimation of smoothed
VaR's at several quantile levels for discrete random variables
that have been applied to model insurance claim frequencies. We have generalized the theory
from finite domains to infinite domains and showed the consistency and
joint asymptotic normality of the smoothed quantile estimators for
the truncated discrete risks. Such theoretical properties have been
established by constructing non-integer values of lower and upper
bounds for the truncated underlying population.

In addition, Monte Carlo
simulation studies have been carried out to illustrate the established
theory, through an implementation of the procedure in the theoretical
design of the truncation methodology. Commonly used discrete distributions
with infinite domains such as the Poisson, NB, and their
(realistic) zero inflated versions have been investigated in simulation
studies. We have successfully illustrated the consistency and asymptotic
normality of the smoothed quantile estimators, through the convergence
of the estimated means and covariance-variance matrices to their
corresponding theoretical counterparts, although
the convergence has been slower for the zero inflated distributions than that
for the regular distributions.

Furthermore, a bootstrap
approximation has been designed to illustrate the theoretical results.
Using the approximation, given just an original sample data set from
each considered distribution, through resampling with replacement,
we were able to see the agreement between the bootstrap estimated mean vector
and the theoretical approximation of the mean vector, as well as the agreement between the bootstrap
estimated covariance-variance matrices and the theoretical
approximation of the covariance-variance matrices.

Finally, we have applied
the truncation methodology on infinite domains to the automobile
accident data and also considered three gradual modifications of
the tail of the data. Through the computation of the vector-valued risk measure C5NS along with
confidence intervals of the original data set and its three
modifications, we have shown that the smoothed quantile
estimators can accurately classify the portfolio riskiness by
adequately assessing the tail risks. To further demonstrate the
advantages of the smoothing methodology, we have compared the tail probabilities obtained
using the smoothed approach and the linear interpolation approach.
We have found that the smoothed approach results in a lower
coefficient of variation in the estimation of tail probabilities
than the linear interpolation approach.

\section*{Acknowledgments}

This research has been supported by the NSERC Alliance--MITACS Accelerate grant entitled ``New Order of Risk Management: Theory and Applications in the Era of Systemic Risk'' from the Natural Sciences and Engineering Research Council (NSERC) of Canada, and the national research organization Mathematics of Information Technology and Complex Systems (MITACS) of Canada.

\end{document}